\newcommand{\PreserveBackslash}[1]{\let\temp=\\#1\let\\=\temp}
\newcolumntype{C}[1]{>{\PreserveBackslash\centering}p{#1}}
\newtheorem{lemma}{Lemma}
\newtheorem{proposition}{\it Proposition}
\begin{document}

\bibliographystyle{IEEEtran}

\title{Intelligent Reflecting Surface for MIMO VLC: \\Joint Design of Surface Configuration and Transceiver Signal Processing}
\vspace{-1cm}

\author{Shiyuan Sun, Fang Yang, \emph{Senior Member, IEEE}, Jian Song, \emph{Fellow, IEEE}, and Rui Zhang, \emph{Fellow, IEEE}
	
\thanks{
	This work was supported by the National Natural Science Foundation of China (61871255), the Beijing National Research Center for Information Science and Technology (BNR2022RC01017), and the Fok Ying Tung Education Foundation.
%	\textit{(Corresponding author: Fang Yang)}
		
	Shiyuan Sun, Fang Yang, and Jian Song are with the Department of Electronic Engineering, Beijing National Research Center for Information Science and Technology, Tsinghua University, Beijing 100084, China, and also with the Key Laboratory of Digital TV System of Guangdong Province and Shenzhen City, Research Institute of Tsinghua University in Shenzhen, Shenzhen 518057, China 
	(e-mail: sunsy20@mails.tsinghua.edu.cn; fangyang@tsinghua.edu.cn; jsong@tsinghua.edu.cn).
	
	Rui Zhang is with the Department of Electrical and Computer
	Engineering, National University of Singapore, Singapore 117583 (e-mail:
	elezhang@nus.edu.sg).
	He is also with the School of Science and Engineering, the Chinese University of Hong Kong (Shenzhen), China, 518172 (e-mail: rzhang@cuhk.edu.cn).
}}

\maketitle
\begin{abstract}
With the capability of reconfiguring the wireless electromagnetic environment, intelligent reflecting surface (IRS) is a new paradigm for designing future wireless communication systems.
In this paper, we consider optical IRS for improving the performance of visible light communication (VLC) under a multiple-input and multiple-output (MIMO) setting.
Specifically, we focus on the downlink communication of an indoor MIMO VLC system and aim to minimize the mean square error (MSE) of demodulated signals at the receiver.
To this end, the MIMO channel gain of the IRS-aided VLC is first derived under the point source assumption, based on which the MSE minimization problem is then formulated subject to the emission power constraints.
Next, we propose an alternating optimization algorithm, which decomposes the original problem into three subproblems, to iteratively optimize the IRS configuration, the precoding and detection matrices for minimizing the MSE.
Moreover, theoretical analysis on the performance of the proposed algorithm in high and low signal-to-noise ratio (SNR) regimes is provided, revealing that the joint optimization process can be simplified in such special cases, and the algorithm's convergence property and computational complexity are also discussed.
Finally, numerical results show that IRS-aided schemes significantly reduce the MSE as compared to their counterparts without IRS, and the proposed algorithm outperforms other baseline schemes.
\end{abstract}

\begin{IEEEkeywords}
Visible light communication (VLC), intelligent reflecting surface (IRS),  multiple-input and multiple-output (MIMO), mean $\!$ square $\!$ error (MSE), alternating optimization (AO).
\end{IEEEkeywords}

\IEEEpeerreviewmaketitle

%\vspace{-0.5cm}
\section{Introduction}
%\vspace{-0.1cm}
\IEEEPARstart{T}{o} cope with approximately 50 exabyte-per-month data traffic in the fifth-generation (5G) and beyond~\cite{traffic}, wireless communication technologies have been significantly advanced to offer higher spectral efficiency (SE) and energy efficiency (EE), more reliable signal transmission, lower transmission latency, etc.
Different from other traditional wireless communication technologies such as millimeter-wave (mmWave) and Terahertz (THz) communications~\cite{heath2016overview}, visible light communication (VLC) modulates signals in the untapped frequency band with wavelengths in the range of 480 nm$\sim$750 nm, and the non-coherent modulation scheme called intensity modulation (IM) / direct detection (DD) is typically employed in VLC.
Owing to the license-free merit of the operating band, abundant spectrum resources including about 400 THz can be safely exploited in VLC, which shows great potential for improving the wireless communication capacity~\cite{pathak2015visible,karunatilaka2015led}.
In the meanwhile, VLC is a green technology due to its low-power transceivers consisting of light-emitting diodes (LEDs) and photodetectors (PDs), and it far outperforms other radio frequency (RF) communications in terms of EE~\cite{obeed2019optimizing}.
Moreover, VLC has advantages in dense frequency reuse, ubiquitous LED devices, and inherent physical layer security, which therefore attracts great attention from academia and industry in developing new VLC techniques, e.g., VLC Consortium (VLCC), IEEE 802.15.7 group, and Home Gigabit Access (OMEGA)~\cite{burchardt2014vlc,chi2015visible,obeed2019optimizing}.

Despite the aforementioned benefits, VLC also faces critical challenges, especially in the case of multiple-input multiple-output (MIMO) VLC~\cite{obeed2019optimizing}.
Specifically, MIMO has been regarded as a revolutionary technique, which can offer multiplexing gain as well as diversity gain, and thus has been widely adopted in 3G to 5G wireless communication systems~\cite{heath2016overview}.
However, the mainstream IM/DD (de)modulation scheme in VLC imposes a real-valued and nonnegative constraint on signal amplitudes, which generally follows the typical Lambertian model that highly depends on the geometric locations of transceivers~\cite{obeed2019optimizing}.
Owing to that, considering a typical scenario where PDs are close to each other, the spatial signature of a specific LED on all PDs is almost homogenous, which results in poor multiplexing and diversity gains of the MIMO VLC~\cite{karunatilaka2015led}.
A more detailed explanation is provided in~\cite{ying2015joint}, where the singular value decomposition (SVD) of a $4\times 4$ MIMO VLC channel is carried out to show that its condition number is as large as 189.3, i.e., only 1 data stream can be supported rather than 4.
Generally, the performance of MIMO VLC can be improved by designing the shape of the receiver, e.g., the angle diversity receiver, the receiver with narrow field-of-view (FoV) PDs, the imaging receiver, etc~\cite{nuwanpriya2015indoor,chen2014high}.
On the other hand, this can also be accomplished by signal processing such as power imbalance and precoding~\cite{ying2015joint}.
These methods attempt to improve the MIMO VLC through adjustments of transceivers; however, far little work has been done from the view of optical channel modification.

Benefitting from the wireless environment configuration capability of the emerging intelligent reflecting surface (IRS)~\cite{wu2019towards}, the wireless channel evolves to be a controllable variable rather than an uncontrolled barrier that has to be overcome~\cite{RIS_overview}.
Basically, IRS is a two-dimensional planar surface composed of periodical artificial atoms, through which the impinging electromagnetic wave can produce controllable induction current patterns, and therefore wave steering, polarizing, and absorbing can be dynamically achieved~\cite{liaskos2018new,yang2016programmable}.
As a promising new paradigm, the research on the RF IRS has been in full swing over recent years, including its channel modeling and estimation~\cite{tang2021wireless,9133156}, IRS-aided MIMO communication~\cite{zhang2020capacity,RIS_DPC}, joint IRS configuration and precoding~\cite{wu2019intelligent}, IRS-enhanced physical layer security~\cite{RIS_Cooperation}, etc.

On the other hand, there has been intensive interest recently in optical IRS within the visible light frequency range, which can be physically implemented by mirror array-based design or meta-surface-based design~\cite{komar2017electrically, hail2019optical}.
Reflected channel characterizations of these two designs are elaborated in~\cite{abdelhady2020visible,9681888,najafi2019intelligent,2021intelligent}, and more comprehensive reviews of the optical IRS are provided in~\cite{9627820,Sun2021,9614037}.
Given the flexibility of the IRS, it has been widely utilized in various aspects to facililtate VLC, namely IRS deployed at the receiver to actively expand FoV~\cite{ndjiongue2021re, ndjiongue2021toward}, IRS deployed at the transmitter to accomplish beam steering~\cite{9614037}, and IRS deployed in the wireless environment to enhance SE/EE and alleviate blockages~\cite{sun_CL,9543660,sun2022joint}.
Moreover, it is shown in~\cite{Sun2021,9614037} that the optical IRS can potentially be used in the MIMO VLC system, where the MIMO channel can be modified by IRS and thereby the multiplexing gain is improved.

In this paper, we consider the downlink communication of an indoor MIMO VLC system, where the emission signal propagates to the receiver through the line-of-sight (LoS) path directly as well as the non-LoS (NLoS) paths specularly reflected by optical IRS.
Under the point source assumption, the perfect channel state information (CSI) is assumed known to the transmitter, and this can be achieved by various channel estimation techniques~\cite{chen2016adaptive,9257457}.
Then, a joint optimization of the IRS configuration and transceiver signal processing including MIMO precoding/detection is investigated for minimizing the receiver mean square error (MSE) of the IRS-aided MIMO VLC.
The main contributions of this paper are summarized as follows:
\begin{itemize}
	\item
	First, this paper presents a new IRS-aided MIMO VLC model.
	Since the frequency of the visible light is much higher than that of RF waves, the power density distribution of the reflected beam concentrates in a certain direction in VLC~\cite{9627820}, and therefore the configuration of each IRS unit can be specified by its association with different transceivers~\cite{sun2022joint}.
	In this extremely near-field scenario, the IRS-aided MIMO VLC channel gain is defined with respect to two association matrices, based on which the MSE minimization problem is then formulated.
	
	\item
	Next, after transforming the formulated problem to a more tractable form, an alternating optimization (AO) algorithm is proposed to solve the problem by iteratively solving three convex subproblems, namely the IRS configuration subproblem, the precoding subproblem, and the detection optimization subproblem.
	Moreover, theoretical analysis on the performance of the proposed AO algorithm in high and low signal-to-noise ratio (SNR) regimes is provided to draw essential insight and simplify the design, and the convergence and computational complexity of the proposed algorithm are also analyzed.
	
	\item
	Finally, the performance of the IRS-aided VLC and the proposed algorithm are numerically evaluated in an indoor environment.
	Several baselines are chosen for comparison, including the distance greedy and the random allocation scheme for IRS configuration, and the zero-forcing (ZF) and the minimum mean square error (MMSE) schemes for precoding.
	From simulation results, it is verified that the MSE and bit error rate (BER) can be significantly lowered by IRS, and the proposed algorithm outperforms other baselines.
\end{itemize}

%本文的结构简述如下
The rest of this paper is organized as follows. In Section~\ref{Sec:Channel}, the channel gains of LoS and NLoS paths are obtained based on the Lambertian model, with which the IRS-aided optical MIMO channel model is established. 
Then, the signal model including the modulation scheme and signal processing is specified in Section~\ref{Sec:Model}, where the MSE expression is derived and the optimization problem is formulated.
In Section~\ref{Sec:ProposedAlgorithm_minimizeMSE}, we propose an AO algorithm to minimize the MSE, and present the performance analysis of the proposed algorithm in two special cases with high and low SNR values, respectively.
Afterwards, numerical results are provided in Section~\ref{Sec:Num} to evaluate the performance of the proposed algorithm for IRS-aided VLC.
Finally, Section~\ref{Sec:Conclude} concludes this paper.

\textit{Notations:}
In this paper, scaler values, vectors, and matrices are denoted by normal letters such as $a$ (or $A$), boldface letters such as $\boldsymbol{a}$, boldface uppercase letters such as $\boldsymbol{A}$, respectively.
Particularly, $\textbf{I}_N$, $\textbf{0}$, and $\textbf{1}$ indicate the $N \times N$ identity matrix, an all-zero matrix, and an all-one matrix, respectively.
The special operators include the Kronecker product $\otimes$, Hadamard product $\odot$, matrix transpose $(\cdot)^T$, vectorization operator $\text{vec}(\cdot)$, trace operator $\text{tr}(\cdot)$, positive definite operator $\succ$, and the differential operator $\partial$ ($\nabla$).
$\mathbb{E} [\cdot]$ is used to represent the statistical expectation.
Besides, diag($[\textbf{a}_1, \cdots, \textbf{a}_n]$) is a block diagonal matrix, where the $i$-th main diagonal vector is given by $\textbf{a}_i,\ i = 1, 2, \cdots, n$.
%Besides, diag($[\textbf{a}_1, \cdots, \textbf{a}_n]$) is a block diagonal matrix, where the $i$-th block diagonal element is a column vector $\textbf{a}_i,\ i = 1, 2, \cdots, n$ while the off-diagonal elements are zeros.
The operators $\left\|\boldsymbol{A}\right\|_F$, $\left\|\boldsymbol{A}\right\|_r$, and $\left\|\boldsymbol{a}\right\|_1$ denote the Frobenius norm, the row norm, and the $l_1$ norm, respectively.
Moreover, the calligraphic letter $\mathcal{A}$ denotes a set with $|\mathcal{A}|$ representing the number of elements in the set; $\mathbb{R}$ and $\mathbb{R}_{+}$ denote the real number set and the real and positive number set, respectively.

%****************************************************信道******************************************************************
\section{Channel Model}
\label{Sec:Channel}
The LoS and NLoS channel components of the IRS-aided MIMO VLC system are described in this section, where the single antenna channel gain is given first based on the Lambertian model and the MIMO VLC channel matrix is then derived.
Particularly, the indices of the transmitter, receiver, and IRS unit are denoted by $n_t\in\mathcal{T} \triangleq \{1,\cdots, N_t\}$, $n_r\in\mathcal{R}\triangleq \{1,\cdots, N_r\}$, $n\in\mathcal{N}\triangleq \{1,\cdots, N\}$, with the cardinalities of $|\mathcal{T}| = N_t$, $|\mathcal{R}| = N_r$, and $|\mathcal{N}| = N$, respectively.

%\vspace{-0.3cm}
\subsection{Channel Gains of LoS/NLoS Paths}
\label{Subsec:Gain}
Before deriving the channel gain, key geometric parameters need to be specified since the channel gain depends on them in VLC.
Suppose that the locations of the $n_t$-th LED, the $n_r$-th PD, and the $n$-th IRS unit are $(n_{t,x}, n_{t,y}, n_{t,z})$, $(n_{r,x}, n_{r,y}, n_{r,z})$, and $(x_n, y_n, z_n)$, respectively in a three-dimensional (3D) Cartesian coordinate system.
The distance between each pair of LED and PD can be expressed as
\begin{equation}
	D_{n_r,n_t} = \sqrt{(n_{r,x} - n_{t,x})^2 + (n_{r,y}-n_{t,y})^2 + (n_{r,z}-n_{t,z})^2},
\end{equation}
and those of LED-to-IRS links and IRS-to-PD links are given by
\begin{align}
	d_{n,n_t} &= \sqrt{(x_n - n_{t,x})^2 + (y_n - n_{t,y})^2 + (z_n - n_{t,z})^2},\\
	d_{n_r,n} &= \sqrt{(n_{r,x} - x_n)^2 + (n_{r,y} - y_n)^2 + (n_{r,z} - z_n)^2}.
\end{align}

Without loss of generality, the normal vectors of transceivers are assumed to be perpendicular to the ground, and the angles of irradiance and incidence of each LoS path are given by
\begin{align}
	\Theta_{n_r,n_t} = \Phi_{n_r,n} = \arccos\left( \frac{n_{t,z}-n_{r,z}}{D_{n_r,n_t}} \right).
\end{align}
Similarly, the angle of irradiance of the NLoS path associated with the $n_t$-th LED and the $n$-th IRS unit can be expressed as
\begin{align}
	\theta_{n,n_t} &= \arccos\left( \frac{n_{t,z}-z_n}{d_{n,n_t}} \right),
\end{align}
and that at the $n_r$-th PD is given by
\begin{align}
	\phi_{n_r,n} = \arccos\left( \frac{z_n-n_{r,z}}{d_{n_r,n}} \right).
\end{align}

\begin{figure}[t]
	\centering
	\includegraphics[width=0.6\textwidth]{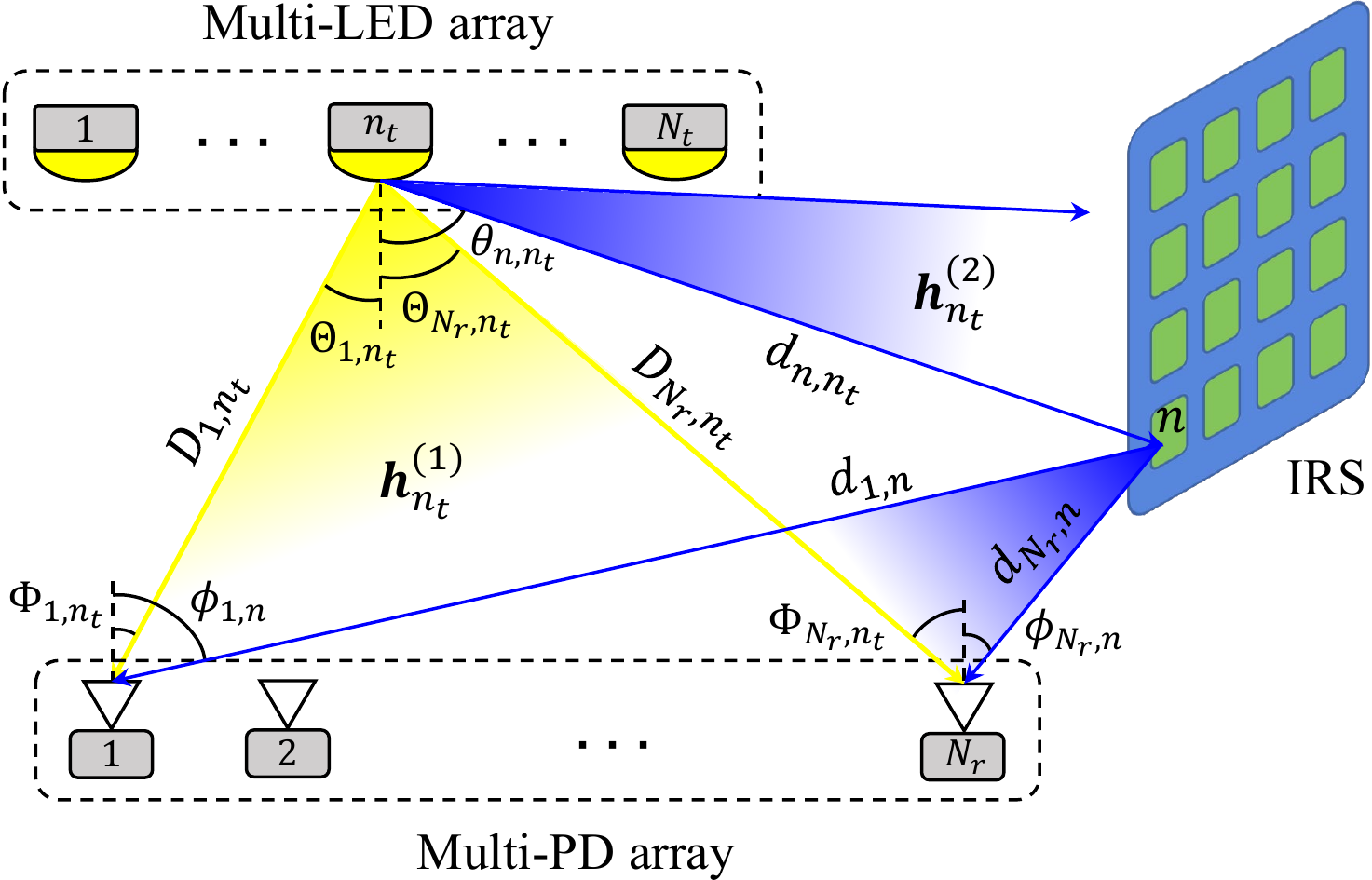}
	\caption{The downlink channel model of IRS-aided MIMO VLC.}
	\vspace{-0.9cm}
	\label{Fig:System}
\end{figure} 

%\vspace{-0.3cm}
\textit{\textbf{LoS channel gain:} }
Generally, the LoS channel gain in VLC can be modeled by the Lambertian radiant formulation, which depends on the locations of transceivers~\cite{pathak2015visible}. 
Thus, the channel gain between the $n_t$-th LED and the $n_r$-th PD is defined as
\begin{equation}
	\label{LoS_VLC_Link}
	h_{n_r,n_t}^{\left(1\right)} = \frac{A_p(m+1)g_{of}}{2\pi D_{n_r,n_t}^2}\cos^m(\Theta_{n_r,n_t})\cos(\Phi_{n_r,n_t})f(\Phi_{n_r,n_t}),
\end{equation}
where parameters $A_p$, $m$, and $g_{of}$ denote the PD area, the Lambertian index, and the optical filter gain, respectively~\cite{pathak2015visible}.
The function $f(\cdot)$ represents the optical concentrator gain, which can be modeled as~\cite{obeed2019optimizing}
\begin{equation}
	\label{Eq:Concentra_gain}
	f(\Phi) = 
	\left\{
	\begin{aligned}
		&\frac{q^2}{\sin^2(\Phi_0)},& \text{if} & \  0 \leq \Phi \leq \Phi_0, \\
		&0, 					 & \text{if} & \  \Phi > \Phi_0,
	\end{aligned}
	\right.
\end{equation}
where $q$ is the refractive index and $\Phi_0$ represents the semi-angle of the FoV.

\textit{\textbf{NLoS channel gain:}}
The NLoS paths in VLC can be categorized into two types: the diffuse reflection path and the specular reflection path.
The channel gain of the former one has been established as a ``multiplicative'' model~\cite{komine2004fundamental}, and experimental results indicate that even the strongest diffuse component is more than 7 dB lower than the weakest LoS component~\cite{ying2015joint}.
Therefore, only specular reflection paths are considered as the NLoS component in this paper.

Generally, wireless channel characteristics depend on the Rayleigh distance $d_0$, which is given by~\cite{tang2021wireless}
\begin{equation}
	d_0 = \frac{2L^2}{\lambda_c},
\end{equation}
where $L$ and $\lambda_c$ denote the dimension of the antenna array and the carrier wavelength, respectively.
The plane wave is a proper model in the far-field scenario, which corresponds to the case when the propagation distance between transceivers is larger than $d_0$.
Otherwise, the angle of arrival is anisotropy and the spherical wave model should be used in the near-field scenario.

In our considered system, due to the nanoscale wavelength of the visible light, $d_0$ gets hundreds of times larger than that in mmWave communications and thus the effect of the near-field becomes more dominant in the optical IRS-aided VLC.
In such an extremely near-field scenario, it has been proven that the Gaussian beam is a reasonable model, and geometric optics can be utilized to analyze the reflected beam~\cite{najafi2019intelligent,2021intelligent}.
In~\cite{abdelhady2020visible,9681888}, optical channel impulse responses of the mirror array-based design and the meta-surface-based design are derived, and the temporal characteristics of the reflected channel are analyzed using the radiometric concept.
Moreover, the phase discontinuities required for the metasurface design and the mirror array design are also obtained.
In this paper, the channel gain of the NLoS link between the $n_t$-th LED and the $n_r$-th PD, which is specularly reflected by the $n$-th IRS unit, is defined as $h_{n_r,n,n_t}^{\left(2\right)}$.
Under the point source assumption, it has been shown that $h_{n_r,n,n_t}^{\left(2\right)}$ can be approximated as~\cite{abdelhady2020visible}
\begin{equation}
	\label{NonLoS_VLC_Link}
	h_{n_r,n,n_t}^{\left(2\right)}=\gamma\frac{A_p(m+1)\cos^m(\theta_{n,n_t})}{2\pi (d_{n,n_t}+d_{n_r,n})^2}g_{of}\cos(\phi_{n_r,n})f(\phi_{n_r,n}),
\end{equation}
where $\gamma$ represents the reflectivity of the IRS unit.
It is observed that the NLoS channel gain in~(\ref{NonLoS_VLC_Link}) is inversely proportional to the square of the distance-sum, showing that the reflected beam can be regarded as emitting from an image transmitter according to geometric optics~\cite{2021intelligent}.

%\vspace{-0.4cm}
\subsection{IRS-aided MIMO VLC Channel}
\label{Subsec:MIMO_channel}
The channel gains between an $N_t$-LED transmitter and an $N_r$-PD receiver are denoted by a matrix $\boldsymbol{H} \in \mathbb{R}^{N_r\times N_t}_+$, which is comprised of an LoS MIMO channel matrix $\boldsymbol{H}_1 \triangleq [\boldsymbol{h}_1^{(1)}, \cdots, \boldsymbol{h}_{N_t}^{(1)}] \in \mathbb{R}^{N_r\times N_t}_+$ and an NLoS MIMO channel matrix $\boldsymbol{H}_2 \triangleq [\boldsymbol{h}_1^{(2)}, \cdots, \boldsymbol{h}_{N_t}^{(2)}] \in \mathbb{R}^{N_r\times N_t}_+$ as
\begin{equation}
	\label{Eq:H_H1_H2}
	\boldsymbol{H} = \boldsymbol{H}_1 + \boldsymbol{H}_2\left( \boldsymbol{F}, \boldsymbol{G} \right),
\end{equation}
where $\boldsymbol{H}_2$ is dependent on IRS configuration matrices $\boldsymbol{F}$ and $\boldsymbol{G}$, $\boldsymbol{h}_{n_t}^{(1)}$ and $\boldsymbol{h}_{n_t}^{(2)}$ denote the LoS and NLoS spatial signatures of the $n_t$-th LED, respectively.
The generation of $\boldsymbol{H}_1$ can be accomplished by aggregating $N_tN_r$ single antenna channel gains, where the element in the $n_r$-th row and the $n_t$-th column is given by~(\ref{LoS_VLC_Link}), i.e., $\boldsymbol{h}_{n_t}^{(1)} \triangleq [h_{1,n_t}^{\left(1\right)}, \cdots, h_{N_r,n_t}^{\left(1\right)}]^T$.

Next, we model the NLoS MIMO channel $\boldsymbol{H}_2$. 
Specifically, the size of optical IRS can be considered infinitely large as compared to the wavelength of the light, which ensures the extremely near-field assumption, i.e., the signal propagation is considered to conform to geometric optics~\cite{2021intelligent}.
As a result, the reflected beam by optical IRS can be considered as emitted from the image source~\cite{9627820}, and its direction follows Snell's law of reflection.
Also, it has been verified in~\cite{9627820} that the power density distribution of such reflected beam in VLC is far more concentrated than that in mmWave and THz bands.
Therefore, when a beam emitted from the $n_t$-th LED and reflected by the $n$-th IRS unit arrives at the $n_r$-th PD, its leaked power to other PDs at sufficiently far-apart locations is practically negligible and thus can be ignored.
Inspired by this, a binary matrix is introduced in~\cite{sun2022joint} to characterize the association behavior between optical IRS units and LESs, and the IRS configuration problem is transformed into an assignment problem.

In our considered MIMO setup, two association matrices are defined as $\boldsymbol{G} \triangleq [ \boldsymbol{g}_1, ..., \boldsymbol{g}_{N_t} ] \in\{0,1\}^{N\times N_t}$ and $\boldsymbol{F} \triangleq [ \boldsymbol{f}_1, ..., \boldsymbol{f}_{N_r} ] \in\{0,1\}^{N\times N_r}$ given in~(\ref{Eq:H_H1_H2}), 
which denote the association behavior between IRS units and different LEDs/PDs, respectively.
More specifically, $g_{n, n_t} = 1$ and $f_{n, n_r} = 1$ indicate that the $n$-th unit is assigned to the $n_t$-th LED and the $n_r$-th PD, i.e., the emission light from this LED is reflected by the $n$-th unit and propagates to the $n_r$-th PD.
Based on the aforementioned definitions, the NLoS MIMO channel matrix of the IRS-aided VLC system can be expressed as
\begin{align}
	\label{Eq:H2_generation}
	\boldsymbol{H}_2 &= \left[
	\begin{matrix}
		\left(\boldsymbol{f}_1 \odot \boldsymbol{g}_1\right)^T\boldsymbol{h}_{1,1}^{(2)}  &  \cdots  &  \left(\boldsymbol{f}_1 \odot \boldsymbol{g}_{N_t}\right)^T\boldsymbol{h}_{1,N_t}^{(2)} \\
		\vdots  &  \ddots  &  \vdots \\
		\left(\boldsymbol{f}_{N_r} \odot \boldsymbol{g}_1\right)^T\boldsymbol{h}_{N_r,1}^{(2)}  &  \cdots  &  \left(\boldsymbol{f}_{N_r} \odot \boldsymbol{g}_{N_t}\right)^T\boldsymbol{h}_{N_r,N_t}^{(2)} \\
	\end{matrix}
	\right],
\end{align}
where the channel gain between the $n_t$-th LED and the $n_r$-th PD is denoted by $\boldsymbol{h}_{n_r,n_t}^{(2)} \triangleq [h_{n_r,1,n_t}^{\left(2\right)}, \cdots, h_{n_r,N,n_t}^{\left(2\right)}]^T$.
Moreover, both $\boldsymbol{G}$ and $\boldsymbol{F}$ have row summation constraints as
\begin{align}
	&\sum_{n_t=1}^{N_t}g_{n,n_t} \leq 1, \quad \forall\ \! n \in \mathcal{N}, \label{Eq:Row summation1}\\
	& \sum_{n_r=1}^{N_r}f_{n, n_r}\leq1, \quad \forall\ \! n \in \mathcal{N}, \label{Eq:Row summation2}
\end{align}
since each IRS unit can be assigned to at most one LED and one PD.

%\begin{figure*}[t]
%	\centering
%	\includegraphics[width=0.9\textwidth]{Blockdiagram.pdf}
%	\caption{The architecture of the proposed IRS-aided MIMO VLC system: baseband digital precoding and detection are adopted at transceivers and IRS is deployed to control the channel gains.}
%	\vspace{-0.5cm}
%	\label{Fig:Block}
%\end{figure*} 

\section{Signal Model and Problem Formulation}
\label{Sec:Model}
In this section, the signal model of IRS-aided MIMO VLC is described in detail, including the modulation scheme, the precoding at the transmitter, and the detection at the receiver.
Then, the MSE metric is derived as the objective of the formulated optimization problem subject to practical VLC constraints.

\subsection{Signal Model of IRS-Aided MIMO VLC}
\label{Subsec:signal}
As depicted in Fig.~\ref{Fig:Block}, an indoor MIMO VLC system with an $N_t$-LED transmitter and an $N_r$-PD receiver is considered, where the MIMO channel consists of the LoS component $\boldsymbol{H}_1$ and the NLoS component $\boldsymbol{H}_2$.
Without loss of generality, the bipolar pulse amplitude modulation (PAM) scheme is adopted to modulate the bit stream, which is split into $N_s$ parallel sub-streams.
In one time slot, $\log_2 M$ independent bits in the $n_s$-th sub-stream are mapped to a symbol $x_{n_s}$, which is expressed as
%\vspace{-0.2cm}
\begin{equation}
	\label{Eq:PAM_define}
	x_{n_s} \in \mathcal{X} : \{\pm I, \pm 3I, \cdots, \pm (M-1)I\}, \ \forall n_s \in \{1,\cdots,N_s\},
\end{equation}
where $I = \sqrt{3/(M^2+1)}$ is a normalization parameter due to $\mathbb{E}(x_{n_s}^2) = \sigma_{x}^2 = 1$.
Then, the generated signal vector $\boldsymbol{x}\in \mathcal{X}^{N_s}$ is precoded by a digital precoding matrix $\boldsymbol{W} \triangleq [\boldsymbol{w}_1^T, \cdots, \boldsymbol{w}_{N_t}^T]^T\in \mathbb{R}^{N_t\times N_s}$, and the coded signal is added by a direct current bias $\boldsymbol{r} = r_0\boldsymbol{1}_{N_\times 1} \in \mathbb{R}^{N_t\times 1}_+$ to meet with the real-valued and nonnegative constraint on signal amplitude.
After electro-optic conversion, the signal travels through the MIMO VLC channel, and the received signal is expressed as
\begin{equation}
	\boldsymbol{y} = \boldsymbol{H}\left(\boldsymbol{W}\boldsymbol{x}+\boldsymbol{r}\right) + \boldsymbol{\omega},
\end{equation}
where $\boldsymbol{\omega}$ is a zero-mean additive white Gaussian noise (AWGN) with the covariance of $\sigma_{\omega}^2\boldsymbol{I}_{N_r}$.

At the receiver, the direct current $\boldsymbol{H}\boldsymbol{r}$ is removed before the signal detection, and the recovered symbol vector can be obtained as 
\begin{equation}
	\label{Eq:MIMO_testing}
	\widetilde{\boldsymbol{x}} = \boldsymbol{Q}\left(\boldsymbol{y}-\boldsymbol{H}\boldsymbol{r}\right),
\end{equation}
where $\boldsymbol{Q} \in \mathbb{R}^{N_s\times N_r}$ represents the detection matrix.
Furthermore, $\widetilde{\boldsymbol{x}}$ is demodulated as $N_s$ sub-streams and the received bits are rearranged by a parallel-to-serial module.

\begin{figure*}[t]
	\centering
	\includegraphics[width=0.9\textwidth]{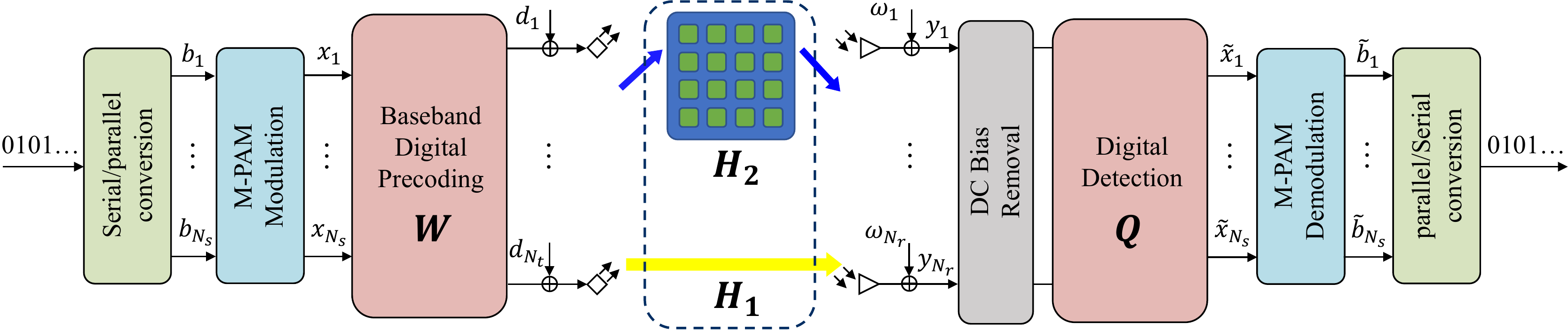}
	\caption{The architecture of the proposed IRS-aided MIMO VLC system: baseband digital precoding and detection are adopted at transceivers and IRS is deployed to control the channel gains.}
	\vspace{-0.5cm}
	\label{Fig:Block}
\end{figure*} 

\subsection{Problem Formulation}
\label{Subsec:Formulation}
This paper aims to minimize the overall MSE of demodulated signals by jointly designing the IRS configuration involved in $\boldsymbol{F}$ and $\boldsymbol{G}$, the precoding matrix $\boldsymbol{W}$ at the transmitter, and the detection $\boldsymbol{Q}$ matrix at the receiver.
First, the MSE metric given by $\text{MSE}( \boldsymbol{F}, \boldsymbol{G}, \boldsymbol{W}, \boldsymbol{Q}) = \mathbb{E}[\left\| \widetilde{\boldsymbol{x}}-\boldsymbol{x} \right\|_2^2]$ can be expanded as
\begin{align}
	\label{Eq:MSE_define}
	\mathbb{E}\left[\left\| \widetilde{\boldsymbol{x}}-\boldsymbol{x} \right\|_2^2\right] & = \mathbb{E}\left[\left\| \left(\boldsymbol{Q}\boldsymbol{H}\boldsymbol{W}-\boldsymbol{I}_{N_s}\right)\boldsymbol{x}+\boldsymbol{Q}\boldsymbol{w} \right\|_2^2\right]\notag\\
	&= \mathbb{E}\left[ \boldsymbol{x}^T\left(\boldsymbol{Q}\boldsymbol{H}\boldsymbol{W}-\boldsymbol{I}_{N_s}\right)^T\left(\boldsymbol{Q}\boldsymbol{H}\boldsymbol{W}-\boldsymbol{I}_{N_s}\right)\boldsymbol{x} \right]  \notag\\
	&\quad + 2\mathbb{E}\left[ \boldsymbol{\omega}^T\boldsymbol{Q}^T\left(\boldsymbol{Q}\boldsymbol{H}\boldsymbol{W}-\boldsymbol{I}_{N_s}\right)\boldsymbol{x} \right] + \mathbb{E}\left( \boldsymbol{\omega}^T\boldsymbol{Q}^T\boldsymbol{Q}\boldsymbol{\omega} \right),
\end{align}
where the last term can be rewritten as
\begin{align}
	\setlength\abovedisplayskip{3pt}
	\mathbb{E}\left( \boldsymbol{\omega}^T\boldsymbol{Q}^T\boldsymbol{Q}\boldsymbol{\omega} \right) = \mathbb{E}\left[\text{tr}(\boldsymbol{\omega}^T\boldsymbol{Q}^T\boldsymbol{Q}\boldsymbol{\omega})\right] = \text{tr}\left( \boldsymbol{Q}\boldsymbol{R}_{\boldsymbol{\omega}}\boldsymbol{Q}^T \right),
	\setlength\belowdisplayskip{3pt}
\end{align}
due to the fact that $\text{tr}(\boldsymbol{C}\boldsymbol{D}) = \text{tr}(\boldsymbol{D}\boldsymbol{C})$.
Following the same process, the MSE in~(\ref{Eq:MSE_define}) can be reformulated as
\begin{align}
	\label{Eq:MSE_final}
	\setlength\abovedisplayskip{3pt}
	\text{MSE}( \boldsymbol{F}, \boldsymbol{G}, \boldsymbol{W}, \boldsymbol{Q}) = \text{tr}\left( \boldsymbol{Q}\boldsymbol{H}\boldsymbol{W} \boldsymbol{R}_{\boldsymbol{x}} (\boldsymbol{Q}\boldsymbol{H}\boldsymbol{W})^T \right) + \text{tr}\left( \boldsymbol{Q}\boldsymbol{R}_{\boldsymbol{\omega}}\boldsymbol{Q}^T \right) +  \text{tr}\left( \boldsymbol{R}_{\boldsymbol{x}} \right) - 2\text{tr}\left( \boldsymbol{Q}\boldsymbol{H}\boldsymbol{W}\boldsymbol{R}_{\boldsymbol{x}} \right),
	\setlength\belowdisplayskip{3pt}
\end{align}
where the signal correlation matrix and the noise correlation matrix are denoted by $\boldsymbol{R}_{\boldsymbol{x}} = \mathbb{E}[\boldsymbol{x}\boldsymbol{x}^T] = \sigma_{x}^2\boldsymbol{I}_{N_s}$ and $\boldsymbol{R}_{\boldsymbol{\omega}} = \mathbb{E}[\boldsymbol{\omega}\boldsymbol{\omega}^T] = \sigma_{\omega}^2\boldsymbol{I}_{N_r}$, respectively.

Due to the practical consideration that VLC system needs to cater to both communication and illumination performance, the transmit power constraint is of great importance in VLC~\cite{wang2020dimming}.
Specifically, the emission power cannot be too large for the comfort of users' eyes, which imposes a constraint at the transmitter as
\begin{equation}
	\label{P:sum_power}
	\setlength\abovedisplayskip{3pt}
	\mathbb{E}\left[\left\| \boldsymbol{W}\boldsymbol{x}+\boldsymbol{r} \right\|_2^2 \right] \leq P_\text{total},
	\setlength\belowdisplayskip{3pt}
\end{equation}
where $P_\text{total}$ is a constant determined by the illumination level.
On the other hand, the real-valued and nonnegative constraint of VLC requires that
\begin{equation}
	\label{P:single_power}
	\boldsymbol{W}\boldsymbol{x}+\boldsymbol{r} \geq \boldsymbol{0},\quad \forall\ \!  \boldsymbol{x} \in \mathcal{X}^{N_s},
\end{equation}
where $\geq$ here means that each element of $\boldsymbol{W}\boldsymbol{x}+\boldsymbol{r}$ is no smaller than zero.
Based on the above discussions, the MSE minimization problem can be formulated as
\begin{align}
	\boldsymbol{P:}\ \min\limits_{\begin{subarray}{c}
			(\ref{P:sum_power}),(\ref{P:single_power})
	\end{subarray}} &\ \text{MSE}\left(\boldsymbol{F}, \boldsymbol{G}, \boldsymbol{W}, \boldsymbol{Q}\right)  \label{P:Object}\\
	\text{s.t.}\ 
	& \sum_{n_t=1}^{N_t}g_{n,n_t} \leq 1, \quad \forall\ \! n \in \mathcal{N}, \label{P:g_row}\\
	& \sum_{n_r=1}^{N_r}f_{n, n_r}\leq1, \quad \forall\ \! n \in \mathcal{N}, \label{P:f_row}\\
	& f_{n,n_r},\ g_{n,n_t}\in\{0,1\}, \quad \forall\ \! n_t \in \mathcal{T},\ n_r \in \mathcal{R}, \label{P:dis}	
\end{align}
where the constraints in~(\ref{P:sum_power})-(\ref{P:single_power}) come from the power limitations and the constraints in~(\ref{P:g_row})-(\ref{P:dis}) are due to the definitions of $\boldsymbol{F}$ and $\boldsymbol{G}$.

\section{Proposed Algorithms to Minimize MSE}
\label{Sec:ProposedAlgorithm_minimizeMSE}
The problem $\boldsymbol{P}$ is a combinational optimization problem and hence is non-deterministic polynomial (NP)-hard to solve.
In this section, an AO algorithm is proposed to optimize the relaxed form of $\boldsymbol{P}$, where the IRS configuration matrices $\boldsymbol{F}$ and $\boldsymbol{G}$, the precoding matrix $\boldsymbol{W}$, and the detection matrix $\boldsymbol{Q}$ are iteratively optimized to minimize MSE.

\subsection{Optimization of IRS Configuration}
When the precoding matrix and the detection matrix are given by $\boldsymbol{W}^{(t)}$ and $\boldsymbol{Q}^{(t)}$ in the $t$-th iteration and kept fixed, the MSE is dependent on IRS configuration matrices $\boldsymbol{F}$ and $\boldsymbol{G}$ only.
The original problem is thus simplified into the following form
\begin{align}
	\boldsymbol{P1:}\ \min\limits_{\begin{subarray}{c}
			\boldsymbol{F}, \boldsymbol{G}:(\ref{P:g_row}),(\ref{P:f_row}),(\ref{P:dis})
	\end{subarray}}&\ \text{MSE}(\boldsymbol{F}, \boldsymbol{G}, \boldsymbol{W}^{(t)}, \boldsymbol{Q}^{(t)}), \label{P1:Object}
	%		\text{s.t.}\ & \boldsymbol{H} = \boldsymbol{H}_1 + \boldsymbol{H}_2\left( \boldsymbol{F}, \boldsymbol{G} \right), \label{P1:H}\\
	%		& \sum_{n_t=1}^{N_t}g_{n,n_t} \leq 1, \quad \forall\ \! n \in \mathcal{N}, \label{P1:g_row}\\
	%		& \sum_{n_r=1}^{N_r}f_{n, n_r}\leq1, \quad \forall\ \! n \in \mathcal{N}, \label{P1:f_row}\\
	%		& f_{n,n_r},\ g_{n,n_t}\in\{0,1\}, \quad \forall\ \! n_t \in \mathcal{T},\ n_r \in \mathcal{R}. \label{P1:discrete}
\end{align}
which is an integer optimization problem and can hardly be optimized in polynomial time.
However, the orthogonality among columns of $\boldsymbol{f}_{n_r}$ and $\boldsymbol{g}_{n_t}$ makes it feasible to transform $\boldsymbol{P1}$ into a more tractable form as
\begin{align}
	\boldsymbol{P1-a:}\ \min\limits_{\boldsymbol{V}}&\ \text{MSE}(\boldsymbol{V}, \boldsymbol{W}^{(t)}, \boldsymbol{Q}^{(t)}) \label{P1-a:Object}\\
	\text{s.t.}\ 
	& \sum_{p=1}^{N_tN_r}v_{n,p} \leq 1, \quad \forall\ \!  n \in \mathcal{N}, \label{P1-a:v_row}\\
	& v_{n,p}\in\{0,1\}, \quad \forall\ \!  p \in \mathcal{P}, \label{P1-a:discrete}
\end{align}
where $\mathcal{P} \triangleq \{1,2,\cdots,N_tN_r \}$ denotes an index set with the cardinality of $|\mathcal{P}| = N_tN_r$ and $\boldsymbol{V} \triangleq \left[ \boldsymbol{v}_1, \cdots, \boldsymbol{v}_{N_tN_r} \right] \in\{0,1\}^{N\times N_tN_r}$ is a defined matrix with each column given by
\begin{equation}
	\label{Eq:v_vector}
	\boldsymbol{v}_{n_r + (n_t - 1)N_r} = \boldsymbol{f}_{n_r} \odot \boldsymbol{g}_{n_t}, \quad \forall\ \! n_t \in \mathcal{T},\ n_r \in \mathcal{R}.
\end{equation}

\begin{proposition}
	\label{Proposition:problem_transform_equivalent}
	The problem $\boldsymbol{P1-a}$ is equivalent to $\boldsymbol{P1}$.
\end{proposition}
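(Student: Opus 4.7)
The plan is to establish the equivalence by showing that (i) the objective of $\boldsymbol{P1}$ depends on $(\boldsymbol{F},\boldsymbol{G})$ only through the variable $\boldsymbol{V}$ defined in~(\ref{Eq:v_vector}), and (ii) the feasible sets of the two problems map onto each other while preserving objective values. Equivalence of problems only requires that the attainable objective values coincide, so I do not need a bijection between feasible sets (indeed, several $(\boldsymbol{F},\boldsymbol{G})$ may collapse to the same $\boldsymbol{V}$, as when a row of $\boldsymbol{F}$ is nonzero while the corresponding row of $\boldsymbol{G}$ is zero).

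First, I would verify that the MSE in~(\ref{Eq:MSE_final}) is a function of $\boldsymbol{V}$ alone (once $\boldsymbol{W}^{(t)},\boldsymbol{Q}^{(t)}$ are fixed). Inspecting~(\ref{Eq:H_H1_H2})--(\ref{Eq:H2_generation}), each entry of $\boldsymbol{H}_2$ takes the form $(\boldsymbol{f}_{n_r}\odot\boldsymbol{g}_{n_t})^T\boldsymbol{h}^{(2)}_{n_r,n_t}=\boldsymbol{v}_{n_r+(n_t-1)N_r}^T\boldsymbol{h}^{(2)}_{n_r,n_t}$, so $\boldsymbol{H}$ and hence the MSE are fully determined by $\boldsymbol{V}$. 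This immediately justifies writing the objective of $\boldsymbol{P1\text{-}a}$ in the form $\mathrm{MSE}(\boldsymbol{V},\boldsymbol{W}^{(t)},\boldsymbol{Q}^{(t)})$.

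Second, I would establish the forward direction: for any $(\boldsymbol{F},\boldsymbol{G})$ feasible in $\boldsymbol{P1}$, the matrix $\boldsymbol{V}$ built from~(\ref{Eq:v_vector}) is feasible in $\boldsymbol{P1\text{-}a}$ with identical objective value. Binary-ness of $v_{n,p}$ is immediate from $v_{n,p}=f_{n,n_r}g_{n,n_t}\in\{0,1\}$, and the row-sum bound follows from the factorization
\begin{equation*}
\sum_{p=1}^{N_tN_r} v_{n,p}=\sum_{n_r=1}^{N_r}\sum_{n_t=1}^{N_t} f_{n,n_r}g_{n,n_t}=\Bigl(\sum_{n_r=1}^{N_r} f_{n,n_r}\Bigr)\Bigl(\sum_{n_t=1}^{N_t} g_{n,n_t}\Bigr)\le 1\cdot 1=1,
\end{equation*}
using~(\ref{P:g_row})--(\ref{P:f_row}). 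Thus the optimal value of $\boldsymbol{P1\text{-}a}$ is no larger than that of $\boldsymbol{P1}$.

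Third, for the reverse direction, given any $\boldsymbol{V}$ feasible in $\boldsymbol{P1\text{-}a}$ I would construct $(\boldsymbol{F},\boldsymbol{G})$ row by row: if row $n$ of $\boldsymbol{V}$ vanishes, set $g_{n,n_t}=0$ and $f_{n,n_r}=0$ for all $n_t,n_r$; if $v_{n,p^\star}=1$ with $p^\star=n_r^\star+(n_t^\star-1)N_r$, set $g_{n,n_t^\star}=f_{n,n_r^\star}=1$ and all other entries of row $n$ of $\boldsymbol{F}$ and $\boldsymbol{G}$ to zero. The constraints~(\ref{P:g_row})--(\ref{P:dis}) are then trivially satisfied, and case-checking shows $f_{n,n_r}g_{n,n_t}=v_{n,n_r+(n_t-1)N_r}$ for every $(n_r,n_t)$, so the resulting Hadamard products reproduce $\boldsymbol{V}$ and therefore the same MSE. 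Combining the two directions gives equal optimal values and identifies optimal solutions up to the reconstruction above; this concludes equivalence.

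The only subtlety, and the main place I would pay attention, is the reverse construction: one must explicitly check that the entries of $\boldsymbol{f}_{n_r}\odot\boldsymbol{g}_{n_t}$ agree with $\boldsymbol{v}_{n_r+(n_t-1)N_r}$ for \emph{all} pairs $(n_r,n_t)$, not just the one picked out by $p^\star$. This is where the row-sum constraint~(\ref{P1-a:v_row}) is essential—without it, a row of $\boldsymbol{V}$ could activate two positions that do not factor into a single rank-one outer product and no valid $(\boldsymbol{F},\boldsymbol{G})$ would reproduce it.
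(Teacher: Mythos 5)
Your proof is correct and follows essentially the same route as the paper: a two-way correspondence between feasible $(\boldsymbol{F},\boldsymbol{G})$ and feasible $\boldsymbol{V}$ that preserves the objective (which depends on $(\boldsymbol{F},\boldsymbol{G})$ only through the Hadamard products), with the reverse direction recovering $(\boldsymbol{F},\boldsymbol{G})$ from the index decomposition of $p^\star$. The only notable difference is in the forward direction, where you bound the row sum directly via $\sum_{p} v_{n,p}=\bigl(\sum_{n_r} f_{n,n_r}\bigr)\bigl(\sum_{n_t} g_{n,n_t}\bigr)\le 1$, whereas the paper argues through the disjoint supports (column orthogonality) of the vectors $\boldsymbol{f}_{n_r}\odot\boldsymbol{g}_{n_t}$; your version is arguably tighter and also makes explicit the all-zero-row case and the non-injectivity of the map, which the paper leaves implicit.
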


\begin{proof}
	Based on the definition of $\boldsymbol{F}$, the support set of $\boldsymbol{f}_{n_r}$ is disjoint to those of $\boldsymbol{f}_{i}, \forall i \neq n_r$, which reveals the column orthogonality of $\boldsymbol{F}$.
	Similarly, the orthogonality among columns of $\boldsymbol{g}_{n_t}$ can also be guaranteed, and the corresponding properties are formulated as
	\begin{align}
		\boldsymbol{f}_{n_{r1}}^T\boldsymbol{f}_{n_{r2}} = \delta_{n_{r1}-n_{r2}},\quad \boldsymbol{g}_{n_{t1}}^T\boldsymbol{g}_{n_{t2}} = \delta_{n_{t1}-n_{t2}},
	\end{align}
	where $\delta_n$ is the discrete Dirac function~\cite{pathak2015visible}. 
	Therefore, the inner product of two arbitrary vectors produced by Hadamard products of $\boldsymbol{f}_{n_{r}}$ and $\boldsymbol{g}_{n_{t}}$ is given by
	\begin{equation}
		\label{Eq:v_orthogonal}
		\left( \boldsymbol{f}_{n_{r1}} \odot \boldsymbol{g}_{n_{t1}} \right)^T \left( \boldsymbol{f}_{n_{r2}} \odot \boldsymbol{g}_{n_{t2}} \right) = \delta_{n_{r1}-n_{r2}}\delta_{n_{t1}-n_{t2}},
	\end{equation}
	which is nonzero if and only if $n_{r1} = n_{r2}$ and $n_{t1} = n_{t2}$.
	According to~(\ref{Eq:v_vector}) and~(\ref{Eq:v_orthogonal}), the support set of $\boldsymbol{v}_{p}$ is disjoint to that of  $\boldsymbol{v}_{i}, \forall i\in \mathcal{P}, i \neq p$, assuring that row vectors of $\boldsymbol{V}$ are one-hot vectors and thus $\boldsymbol{V}$ has orthogonal columns.

	As for the sufficiency, the ones' locations of $\boldsymbol{F}$ and $\boldsymbol{G}$ can be inferred directly from given $\boldsymbol{V}$.
	Specifically, $v_{n,n_r + (n_t - 1)N_r} = 1$ implies that $f_{n, n_r} = g_{n, n_t} = 1$ and $f_{n, i} = g_{n, j} = 0, \forall i \neq n_r, j \neq n_t$, and the proof is thus completed.
\end{proof}

Therefore, problem $\boldsymbol{P1}$ can be solved by first optimizing $\boldsymbol{P1-a}$ and then recovering the IRS configuration matrices $\boldsymbol{F}$ and $\boldsymbol{G}$ according to Proposition~\ref{Proposition:problem_transform_equivalent}.
In general, heuristic algorithms such as the Tabu search algorithm are effective to optimize $\boldsymbol{P1-a}$.
Nevertheless, in consideration of the computational complexity, a more practical solution is via optimizing the relaxed form of $\boldsymbol{P1-a}$, after which the binary result can be obtained according to the minimum distance projection criterion.
In the sequel, the relaxed problem of $\boldsymbol{P1-a}$ is defined as
\begin{align}
	\boldsymbol{P1-b:}\ \min\limits_{\begin{subarray}{c}
			\boldsymbol{V}:(\ref{P1-a:v_row})
	\end{subarray}}&\ \text{MSE}(\boldsymbol{V}, \boldsymbol{W}^{(t)}, \boldsymbol{Q}^{(t)}) \label{P1-b:Object}\\
	\text{s.t.}\ 
	& 0 \leq v_{n,p} \leq 1, \quad \forall\ \!  n \in \mathcal{N},\ p \in \mathcal{P}. \label{P1-b:discrete}
\end{align}

\begin{lemma}
	\label{Lemma:MSE_convex_H}
	A trace equation holds as
	$\text{tr}\{\boldsymbol{M}^T\boldsymbol{C}\boldsymbol{M}\boldsymbol{D}\}=\text{vec}(\boldsymbol{M})^T(\boldsymbol{D}^T\otimes\boldsymbol{C})\text{vec}(\boldsymbol{M})$~\cite{horn1994topics}.
\end{lemma}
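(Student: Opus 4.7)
The plan is to obtain this identity by chaining two classical vectorization facts. The first is the Frobenius inner-product form of the trace, namely $\text{tr}(\boldsymbol{A}^T \boldsymbol{B}) = \text{vec}(\boldsymbol{A})^T \text{vec}(\boldsymbol{B})$ for any two real matrices of compatible size. The second is the Kronecker--vec rule $\text{vec}(\boldsymbol{X}\boldsymbol{Y}\boldsymbol{Z}) = (\boldsymbol{Z}^T \otimes \boldsymbol{X})\,\text{vec}(\boldsymbol{Y})$, which is the standard bridge between ordinary matrix products and Kronecker products and is stated, for instance, in the reference the lemma already cites.

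First I would apply the trace identity to the outer factors by setting its ``$\boldsymbol{A}$'' equal to $\boldsymbol{M}$ and its ``$\boldsymbol{B}$'' equal to $\boldsymbol{C}\boldsymbol{M}\boldsymbol{D}$; this immediately rewrites the left-hand side as $\text{vec}(\boldsymbol{M})^T \,\text{vec}(\boldsymbol{C}\boldsymbol{M}\boldsymbol{D})$. Next I would apply the Kronecker--vec rule to the inner factor with $\boldsymbol{X} = \boldsymbol{C}$, $\boldsymbol{Y} = \boldsymbol{M}$, $\boldsymbol{Z} = \boldsymbol{D}$, giving $\text{vec}(\boldsymbol{C}\boldsymbol{M}\boldsymbol{D}) = (\boldsymbol{D}^T \otimes \boldsymbol{C})\,\text{vec}(\boldsymbol{M})$. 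Substituting produces exactly the claimed quadratic form in $\text{vec}(\boldsymbol{M})$.

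Since both underlying identities are textbook results, no genuine obstacle arises; the only care required is to verify that the dimensions of $\boldsymbol{C}$ and $\boldsymbol{D}$ are such that $\boldsymbol{C}\boldsymbol{M}\boldsymbol{D}$ has the same shape as $\boldsymbol{M}$, which is implicit in the trace being well defined on the left-hand side. This bookkeeping guarantees that $\boldsymbol{D}^T \otimes \boldsymbol{C}$ is a square matrix of the appropriate order to sandwich between $\text{vec}(\boldsymbol{M})^T$ and $\text{vec}(\boldsymbol{M})$, which is precisely the form that will later be exploited to expose the MSE objective as a quadratic function of the vectorized IRS configuration variable.
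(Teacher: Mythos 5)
Your proof is correct: the two identities you invoke, $\text{tr}(\boldsymbol{A}^T\boldsymbol{B})=\text{vec}(\boldsymbol{A})^T\text{vec}(\boldsymbol{B})$ and $\text{vec}(\boldsymbol{X}\boldsymbol{Y}\boldsymbol{Z})=(\boldsymbol{Z}^T\otimes\boldsymbol{X})\text{vec}(\boldsymbol{Y})$, chain together exactly as you describe to give the claimed quadratic form. The paper itself offers no proof and simply cites Horn and Johnson, so your argument is precisely the standard derivation the citation points to; there is nothing to add.
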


\begin{proposition}
	\label{Proposition:MSE_convex}
	The objective of problem $\boldsymbol{P1-b}$ is convex with respect to $\boldsymbol{V}$.
\end{proposition}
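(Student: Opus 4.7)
The plan is to reduce the question to a standard "PSD quadratic form composed with an affine map is convex" argument, exploiting that $\boldsymbol{H}$ depends \emph{affinely} on $\boldsymbol{V}$ when $\boldsymbol{W}^{(t)}$ and $\boldsymbol{Q}^{(t)}$ are fixed.

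First, I would make the $\boldsymbol{V}$-dependence of $\boldsymbol{H}$ explicit. From~(\ref{Eq:H2_generation}) together with~(\ref{Eq:v_vector}), the $(n_r,n_t)$-entry of $\boldsymbol{H}_2$ equals $\boldsymbol{v}_{n_r+(n_t-1)N_r}^T \boldsymbol{h}_{n_r,n_t}^{(2)}$, which is a linear functional of one column of $\boldsymbol{V}$. Consequently $\boldsymbol{H}_2(\boldsymbol{V})$ is linear in $\boldsymbol{V}$, and $\boldsymbol{H} = \boldsymbol{H}_1 + \boldsymbol{H}_2(\boldsymbol{V})$ is affine in $\boldsymbol{V}$. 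Writing the MSE from~(\ref{Eq:MSE_final}) as
\begin{equation}
\text{MSE} = \underbrace{\text{tr}\!\left(\boldsymbol{Q}\boldsymbol{H}\boldsymbol{W}\boldsymbol{R}_{\boldsymbol{x}}\boldsymbol{W}^T\boldsymbol{H}^T\boldsymbol{Q}^T\right)}_{\text{quadratic in }\boldsymbol{H}} - \underbrace{2\,\text{tr}\!\left(\boldsymbol{Q}\boldsymbol{H}\boldsymbol{W}\boldsymbol{R}_{\boldsymbol{x}}\right)}_{\text{linear in }\boldsymbol{H}} + \text{const},
\end{equation}
the terms independent of $\boldsymbol{H}$ (namely $\text{tr}(\boldsymbol{Q}\boldsymbol{R}_{\boldsymbol{\omega}}\boldsymbol{Q}^T)$ and $\text{tr}(\boldsymbol{R}_{\boldsymbol{x}})$) are constants in $\boldsymbol{V}$ and can be ignored for convexity.

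Next, I would vectorize the quadratic piece. Let $\boldsymbol{A} \triangleq \boldsymbol{W}^{(t)}\boldsymbol{R}_{\boldsymbol{x}}(\boldsymbol{W}^{(t)})^T$ and $\boldsymbol{B} \triangleq (\boldsymbol{Q}^{(t)})^T\boldsymbol{Q}^{(t)}$. Using cyclic invariance of the trace,
\begin{equation}
\text{tr}\!\left(\boldsymbol{Q}^{(t)}\boldsymbol{H}\boldsymbol{W}^{(t)}\boldsymbol{R}_{\boldsymbol{x}}(\boldsymbol{W}^{(t)})^T\boldsymbol{H}^T(\boldsymbol{Q}^{(t)})^T\right) = \text{tr}\!\left(\boldsymbol{H}\boldsymbol{A}\boldsymbol{H}^T\boldsymbol{B}\right).
\end{equation}
Applying Lemma~\ref{Lemma:MSE_convex_H} with $\boldsymbol{M} = \boldsymbol{H}^T$, $\boldsymbol{C} = \boldsymbol{A}$, $\boldsymbol{D} = \boldsymbol{B}$ gives
\begin{equation}
\text{tr}\!\left(\boldsymbol{H}\boldsymbol{A}\boldsymbol{H}^T\boldsymbol{B}\right) = \text{vec}(\boldsymbol{H}^T)^T \bigl(\boldsymbol{B}^T \otimes \boldsymbol{A}\bigr)\,\text{vec}(\boldsymbol{H}^T).
\end{equation}
Since $\boldsymbol{B}$ and $\boldsymbol{A}$ are both Gram matrices (with $\boldsymbol{R}_{\boldsymbol{x}} = \sigma_{x}^2\boldsymbol{I}_{N_s} \succ \boldsymbol{0}$), they are symmetric positive semidefinite, so $\boldsymbol{B}^T \otimes \boldsymbol{A}$ is also positive semidefinite (PSD of Kronecker products of PSD matrices is standard).

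Finally, I would stitch these pieces together. The map $\boldsymbol{V} \mapsto \text{vec}(\boldsymbol{H}(\boldsymbol{V})^T)$ is affine, the quadratic form above is PSD in that vectorized argument and therefore convex, its composition with an affine map remains convex, and adding the affine term $-2\,\text{tr}(\boldsymbol{Q}^{(t)}\boldsymbol{H}\boldsymbol{W}^{(t)}\boldsymbol{R}_{\boldsymbol{x}})$ plus constants preserves convexity. Hence the MSE objective in~(\ref{P1-b:Object}) is convex in $\boldsymbol{V}$, and together with the linear (hence convex) constraints~(\ref{P1-a:v_row}) and the box relaxation~(\ref{P1-b:discrete}), problem $\boldsymbol{P1-b}$ is a convex program. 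The only nontrivial step is recognizing the Kronecker-product PSD structure after vectorization; everything else is bookkeeping, so I do not anticipate any real obstacle.
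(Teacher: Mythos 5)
Your proof is correct and follows essentially the same route as the paper: apply the trace--vec--Kronecker identity of Lemma~\ref{Lemma:MSE_convex_H} to exhibit the first MSE term as a quadratic form with a positive (semi)definite Kronecker-product kernel, observe the remaining $\boldsymbol{H}$-dependent term is affine, and then compose with the affine map $\boldsymbol{V}\mapsto\boldsymbol{H}$ induced by~(\ref{Eq:V_linear}). The only cosmetic differences are that you vectorize $\boldsymbol{H}^T$ instead of $\boldsymbol{H}$ and invoke the standard PSD-of-Kronecker-products fact rather than the paper's explicit Gram factorization in~(\ref{Eq:MSE_quadratic_proof}) --- and your positive-\emph{semi}definite claim is in fact the more careful statement, since the Gram matrix need not be strictly positive definite.
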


\begin{proof}		
	According to Lemma~\ref{Lemma:MSE_convex_H}, the first term of the MSE function~(\ref{Eq:MSE_final}) is reformulated as
	\begin{align}
		\label{Eq:MSE_quadratic}
		\text{tr}\left( \boldsymbol{Q}^{(t)}\boldsymbol{H}\boldsymbol{W}^{(t)} \boldsymbol{R}_{\boldsymbol{x}} \left(\boldsymbol{Q}^{(t)}\boldsymbol{H}\boldsymbol{W}^{(t)}\right)^T \right) & = \text{tr}\left( \boldsymbol{H}^T\boldsymbol{Q}^{(t)T}\boldsymbol{Q}^{(t)}\boldsymbol{H}\boldsymbol{W}^{(t)}\boldsymbol{R}_{\boldsymbol{x}}\boldsymbol{W}^{(t)T} \right) \notag\\
		&= \text{vec}(\boldsymbol{H})^T\left(\left( \boldsymbol{W}^{(t)}\boldsymbol{R}_{\boldsymbol{x}}\boldsymbol{W}^{(t)T} \right)\otimes\left( \boldsymbol{Q}^{(t)T}\boldsymbol{Q}^{(t)} \right)\right)\text{vec}(\boldsymbol{H}),
	\end{align}
	where the central square matrix can be rewritten based on the properties of Kronecker product as
	\begin{align}
		\label{Eq:MSE_quadratic_proof}
		\left( \boldsymbol{W}^{(t)}\boldsymbol{R}_{\boldsymbol{x}}\boldsymbol{W}^{(t)T} \right)\otimes\left( \boldsymbol{Q}^{(t)T}\boldsymbol{Q}^{(t)} \right) & = \left[ \left( \boldsymbol{W}^{(t)}\boldsymbol{R}_{\boldsymbol{x}}^{\frac{1}{2}} \right)\otimes\boldsymbol{Q}^{(t)T} \right] \left[ \left( \boldsymbol{R}_{\boldsymbol{x}}^{\frac{1}{2}}\boldsymbol{W}^{(t)T} \right)\otimes\boldsymbol{Q}^{(t)} \right]  \notag\\
		&= \left[ \left( \boldsymbol{R}_{\boldsymbol{x}}^{\frac{1}{2}}\boldsymbol{W}^{(t)T} \right)\otimes\boldsymbol{Q}^{(t)} \right]^T  \left[ \left( \boldsymbol{R}_{\boldsymbol{x}}^{\frac{1}{2}}\boldsymbol{W}^{(t)T} \right)\otimes\boldsymbol{Q}^{(t)} \right] \succ \boldsymbol{0},
	\end{align}
	which is a positive definite matrix.
	Therefore, $\text{MSE}$ is convex with respect to $\boldsymbol{H}$ since another $\boldsymbol{H}$-dependent term in~(\ref{Eq:MSE_final}) is affine.
	Define an auxiliary matrix $\boldsymbol{H}^{\textit{NLoS}} \in \mathbb{R}^{N\times N_tN_r}_+$ as
	\begin{equation}
		\label{Eq:H_nlos}
		\boldsymbol{H}^{\textit{NLoS}} \triangleq \left[ \boldsymbol{h}_{1,1}^{(2)}, \cdots, \boldsymbol{h}_{N_r,1}^{(2)}, \boldsymbol{h}_{1,2}^{(2)}, \cdots, \boldsymbol{h}_{N_r,N_t}^{(2)} \right],
	\end{equation}
	and then according to the definition of~(\ref{Eq:H2_generation}), the vectorization of $\boldsymbol{H}_2$ can be expressed as
	\begin{equation}
		\label{Eq:V_linear}
		\text{vec}\left( \boldsymbol{H}_2 \right) = \text{diag}\left( \boldsymbol{H}^{\textit{NLoS}} \right)^T\text{vec}\left( \boldsymbol{V} \right),
	\end{equation}
	which shows that MSE is convex with respect to $\boldsymbol{V}$ since $\boldsymbol{H}$ and $\boldsymbol{V}$ are affine.
\end{proof}

Typically, the penalty method and gradient descent algorithm can be used to optimize a convex optimization problem with ensured global optimality~\cite{Boyd}.
However, the projected gradient descent (PGD) algorithm, which will be elaborated in Section~\ref{Subsec:PGD}, is more efficient compared to the aforementioned algorithms~\cite{bubeck2014convex} in terms of convergence rate.
To summarize, the IRS configuration problem $\boldsymbol{P1}$ is solved by optimizing its equivalent problem $\boldsymbol{P1-a}$, which is relaxed as $\boldsymbol{P1-b}$ and then optimized by the PGD algorithm.

\subsection{Optimizations of MIMO Precoding and Detection}
\label{Subsec:Precoding}
Designs of the precoding matrix $\boldsymbol{W}$ and detection matrix $\boldsymbol{Q}$ are discussed separately in this subsection, after which the overall proposed algorithm is presented to solve $\boldsymbol{P}$.

\textit{\textbf{1) Optimize $\boldsymbol{W}$ given $\boldsymbol{V}^{(t+1)}$ and $\boldsymbol{Q}^{(t)}$:}}
With the IRS configuration matirx $\boldsymbol{V}^{(t+1)}$ and the detection matrix $\boldsymbol{Q}^{(t)}$ given as fixed, the original problem $\boldsymbol{P}$ is transformed into a simplified problem $\boldsymbol{P2}$, which aims to optimize the precoding matrix $\boldsymbol{W}$ with fixed $\boldsymbol{V}^{(t+1)}$ and $\boldsymbol{Q}^{(t)}$.
Specifically, the power constraint in~(\ref{P:sum_power}) is formulated by
\begin{align}
	\label{Eq:total_power_constraint}
	\mathbb{E}\left[\left\| \boldsymbol{W}\boldsymbol{x}+\boldsymbol{r} \right\|_2^2 \right] &= \mathbb{E}\left[ \text{tr}\left(\boldsymbol{x}^T\boldsymbol{W}^T\boldsymbol{W}\boldsymbol{x}\right) + 2\boldsymbol{r}^T\boldsymbol{W}\boldsymbol{x} + \boldsymbol{r}^T\boldsymbol{r} \right] \notag\\
	&= \sigma_{x}^2\left\|\boldsymbol{W}\right\|_F^2 + \boldsymbol{r}^T\boldsymbol{r} \leq P_\text{total},
\end{align}
where the second equality holds since $\text{tr}( \boldsymbol{W}^T\boldsymbol{W} ) = \|\boldsymbol{W}\|_F^2$ and $\mathbb{E}[\boldsymbol{x}] = \boldsymbol{0}_{N_s\times 1}$.
Next, the real-valued and nonnegative amplitude constraint in~(\ref{P:single_power}) leads to
\begin{equation}
	\label{Eq:single_power_constraint}
	\sigma_{x}I(M - 1)\| \boldsymbol{w}_{n_t} \|_1  \leq d_{n_t},\quad \forall\ \!  n_t \in \mathcal{T},
\end{equation}
and thus the optimization problem on the precoding matrix $\boldsymbol{W}$ can be expressed as
\begin{align}
	\boldsymbol{P2:}\ \min\limits_{\begin{subarray}{c}
			\boldsymbol{W}:(\ref{Eq:total_power_constraint}), (\ref{Eq:single_power_constraint})
	\end{subarray}}&\ \text{MSE}(\boldsymbol{V}^{(t+1)}, \boldsymbol{W}, \boldsymbol{Q}^{(t)}), \label{P2:Object}
	%		\text{s.t.}\ & \sigma_{x}^2\left\|\boldsymbol{W}\right\|_F^2 \leq P_\text{total} - \boldsymbol{r}^T\boldsymbol{r}, \label{P2:sum_power}\\
	%		& \sigma_{x}I(M - 1)\| \boldsymbol{w}_{n_t} \|_1 \leq d_{n_t}, \forall\ \!  n_t \in \mathcal{T}, \label{P2:single_power}
\end{align}
where the objective is convex with respect to $\boldsymbol{W}$ due to Lemma~\ref{Lemma:MSE_convex_H}.
As for the constraints in~(\ref{Eq:total_power_constraint}) and~(\ref{Eq:single_power_constraint}), the former set is a convex region since it is the sum of the Frobenius norm and a linear term, while the latter one is actually a region bounded by an affine surface.
Consequently, $\boldsymbol{P2}$ is also a convex optimization problem, which can be solved by the PGD algorithm described in Section~\ref{Subsec:PGD}.

\textit{\textbf{2) Optimize $\boldsymbol{Q}$ given $\boldsymbol{V}^{(t+1)}$ and $\boldsymbol{W}^{(t+1)}$:}}
With the IRS configuration matrix $\boldsymbol{V}^{(t+1)}$ and the precoding matrix $\boldsymbol{W}^{(t+1)}$ given as fixed, the problem of optimizing the detection matrix $\boldsymbol{Q}$ turns to be an unconstrained convex optimization problem given by
\begin{equation}
	\label{Pro3}
	\boldsymbol{P3:}\ \min\limits_{\boldsymbol{Q}}\ \text{MSE}(\boldsymbol{V}^{(t+1)}, \boldsymbol{W}^{(t+1)}, \boldsymbol{Q}),
\end{equation}
whose globally optimal point can be attained by setting $\nabla_{\boldsymbol{Q}}\ \text{MSE}(\boldsymbol{V}^{(t+1)}, \boldsymbol{W}^{(t+1)}, \boldsymbol{Q}) = \boldsymbol{0}$,
%\begin{equation}
%	\label{Eq:Pro3_partialDeri}
%	\nabla_{\boldsymbol{Q}}\ \text{MSE}(\boldsymbol{V}^{(t+1)}, \boldsymbol{W}^{(t+1)}, \boldsymbol{Q}) = \boldsymbol{0},
%\end{equation}
which has been derived as~\cite{ying2015joint}
\begin{align}
	\label{Eq:Pro3_resultB}
	\boldsymbol{Q}^{(t + 1)} = \boldsymbol{R}_{\boldsymbol{x}}\left( \boldsymbol{H}^{(t+1)}\boldsymbol{W}^{(t+1)} \right)^T \left( \boldsymbol{H}^{(t+1)}\boldsymbol{W}^{(t+1)}\boldsymbol{R}_{\boldsymbol{x}}\left( \boldsymbol{H}^{(t+1)}\boldsymbol{W}^{(t+1)} \right)^T + \boldsymbol{R}_{\boldsymbol{\omega}} \right)^{-1}
\end{align}
in closed form.

\begin{algorithm}[t]
	\caption{Proposed Alternating Optimization Algorithm to Solve $\boldsymbol{P}$}
	\label{solve_P}
	\textbf{Input:} CSI matrices $\boldsymbol{H}_1$ and $\boldsymbol{H}^{\textit{NLoS}}$, direct current $\boldsymbol{r}$, iteration index $t\gets 0$, and IRS unit index $n\gets 1$.\\
	\textbf{Output:} Precoding matrix $\boldsymbol{W}$, detection matrix $\boldsymbol{Q}$, and IRS configuration matrices $\boldsymbol{F}$, $\boldsymbol{G}$.
	\begin{algorithmic}[1]
		\STATE IRS pre-configuration by minimum distance policy:
		
		$\boldsymbol{G}^{\left(t\right)}$ and $\boldsymbol{F}^{\left(t\right)}$ are initialized by associating each IRS unit to the nearest LED and PD, respectively;
		
		\STATE $\boldsymbol{V}^{(t)}$ is obtained by~(\ref{Eq:v_vector});
		\STATE $\boldsymbol{H}^{(t)}$ is obtained by~(\ref{Eq:H_H1_H2}) and~(\ref{Eq:V_linear});
		\STATE Init: 	
		$\boldsymbol{W}^{(t)} \gets \boldsymbol{H}^{(t)+}$, $\boldsymbol{Q}^{(t)} \gets [\boldsymbol{I}_{N_s}, \boldsymbol{0}_{N_s\times (N_r - N_s)}]$; 
		
		Optimize $\boldsymbol{V}$, $\boldsymbol{W}$, and $\boldsymbol{Q}$ iteratively:
		\REPEAT
		\STATE Solve $\boldsymbol{P1-b}$ to obtain the optimal $\boldsymbol{V}^{(t+1)}$ under the constant parameters $\boldsymbol{W}^{(t)}$ and $\boldsymbol{Q}^{(t)}$; $\!\!\!\!\!\!$
		\STATE Solve $\boldsymbol{P2}$ to obtain the optimal $\boldsymbol{W}^{(t+1)}$ under the constant parameters $\boldsymbol{V}^{(t+1)}$ and $\boldsymbol{Q}^{(t)}$;
		\STATE Solve $\boldsymbol{P3}$ to obtain the optimal $\boldsymbol{Q}^{(t+1)}$ according to the formula~(\ref{Eq:Pro3_resultB}) under the constant parameters $\boldsymbol{V}^{(t+1)}$ and $\boldsymbol{W}^{(t+1)}$;
		\STATE Update the index: $t \gets t + 1$;
		\UNTIL{$\|\text{MSE}^{(t)} - \text{MSE}^{(t-1)}\|\leq \epsilon$}
		
		Recover IRS configuration matrices from continuous $\boldsymbol{V}$:$\!\!\!\!\!\!$
		\REPEAT
		\STATE Find the maximum value: $p^\dagger \gets \arg\max\limits_{p} v_{n,p}$;
		\STATE Obtain associate transceivers by~(\ref{Eq:RecoverFG});
		\STATE Recovery: $f_{n,n_r}\gets 1$ and $f_{n, j\neq n_r} \gets 0$;
		\STATE Recovery: $g_{n,n_t}\gets 1$ and $g_{n, j\neq n_t} \gets 0$;
		\STATE Update the index: $n \gets n + 1$;
		\UNTIL{$n > N$}
	\end{algorithmic}
\end{algorithm}
\setlength{\textfloatsep}{0.5cm}

The overall algorithm to solve the problem $\boldsymbol{P}$ is provided in \textbf{Algorithm}~\ref{solve_P}, which iteratively optimizes the IRS configuration, the precoding matrix at the transmitter, and the detection matrix at the receiver.
Once the MSE converges, IRS configuration matrices $\boldsymbol{F}$ and $\boldsymbol{G}$ can be recovered from the relaxed form of $\boldsymbol{V}$ according to the minimum distance projection criterion.
More specifically, supposing that the column index of the maximum value of the $n$-th row is denoted by $p^\dagger$, its associated transmitter $n_t$ and receiver $n_r$ can be obtained based on Proposition~\ref{Proposition:problem_transform_equivalent} as
\begin{align}
	\label{Eq:RecoverFG}
	n_t = \left\lfloor \frac{p^\dagger-1}{N} \right\rfloor + 1, \quad n_r = \text{mod}\left( p^\dagger-1, N \right) + 1,
\end{align}
%\begin{align}
%	n_t &= \left\lfloor \frac{p^\dagger-1}{N} \right\rfloor + 1, \label{Eq:RecoverG}\\
%	n_r &= \text{mod}\left( p^\dagger-1, N \right) + 1, \label{Eq:RecoverF}
%\end{align}
where $\text{mod}(\cdot)$ denotes the modulus operator and $\lfloor \cdot \rfloor$ is the floor operator.
Then, it can be inferred from~(\ref{Eq:v_vector}) that $f_{n,n_r} = 1$ and $g_{n,n_t} = 1$, while other elements in the $n$-th row of $\boldsymbol{F}$ and $\boldsymbol{G}$ are zeros due to the constraints in~(\ref{P:g_row}) and~(\ref{P:f_row}).

\subsection{PGD Algorithm to Optimize \textbf{P1-b} and \textbf{P2}}
\label{Subsec:PGD}
The PGD algorithm is an effective tool for optimizing convex optimization problems.
Let $\boldsymbol{z}$ and $\boldsymbol{z}^{(i)}$ denote the variable and the fixed variable in the $i$-th iteration, respectively.
The PGD algorithm updates the optimization variables iteratively according to~\cite{bubeck2014convex}
\begin{equation}
	\label{Eq:PGD_projection}
	\setlength\abovedisplayskip{5pt}
	\boldsymbol{z}^{(i+1)} = \arg\min\limits_{\boldsymbol{z}\in \mathcal{Z}}D_h\left( \boldsymbol{z}, \boldsymbol{z}^{(i)} - \alpha\nabla f\left(\boldsymbol{z}^{(i)}\right) \right),
	\setlength\belowdisplayskip{5pt}
\end{equation}
where $\alpha$ denotes the stepsize, $\mathcal{Z}$ is a convex domain of $\boldsymbol{z}$, and $D_h$ indicates a specific distance metric with respect to $\boldsymbol{z}$ and $\boldsymbol{z}^{(i)} - \alpha\nabla f(\boldsymbol{z}^{(i)})$.

\begin{algorithm}[t]
	\caption{IRS Optimization Algorithm to Solve $\boldsymbol{P1-b}$}
	\label{solve_P1}
	\textbf{Input:} CSI matrices $\boldsymbol{H}_1$ and $\boldsymbol{H}^{\textit{NLoS}}$, direct current $\boldsymbol{r}$, iteration index $i\gets 1$.\\
	\textbf{Output:} Relaxed variable $\boldsymbol{V}$.
	\begin{algorithmic}[1]
		\STATE Calculate the minimum point of the Lagrangian function by~(\ref{Eq:dualFunction_derivative_0});
		\STATE The objective function of the dual problem of $\boldsymbol{P1-b}$ is obtained by~(\ref{Eq:LagDual_function});
		\STATE Init: $\boldsymbol{\mu}_1 \gets \epsilon\boldsymbol{1}_{N\times 1}$, $\boldsymbol{\mu}_2 \gets \epsilon\boldsymbol{1}_{NN_tN_r\times 1}$, $\boldsymbol{\mu}_3 \gets \epsilon\boldsymbol{1}_{NN_tN_r\times 1}$;
		
		\REPEAT
		\STATE Calculate $\nabla_{\boldsymbol{\mu}_1}\ g( \boldsymbol{\mu}_1, \boldsymbol{\mu}_2^{(i)}, \boldsymbol{\mu}_3^{(i)} )$ by~(\ref{Eq:dual_mu1_gradient});
		\STATE Gradient Projection: 
		
		$\boldsymbol{\mu}_1^{(i+1)} = \big[ \boldsymbol{\mu}_1^{(i)} + \alpha\nabla_{\boldsymbol{\mu}_1}\ g( \boldsymbol{\mu}_1, \boldsymbol{\mu}_2^{(i)}, \boldsymbol{\mu}_3^{(i)} )|_{\boldsymbol{\mu}_1^{(i)}} \big]^+$;  
		
		\STATE Calculate $\nabla_{\boldsymbol{\mu}_2}\ g( \boldsymbol{\mu}_1^{(i+1)}, \boldsymbol{\mu}_2, \boldsymbol{\mu}_3^{(i)} )$ by~(\ref{Eq:dual_mu2_gradient});
		\STATE Gradient Projection: repeat step 6 for $\boldsymbol{\mu}_2$;
		\STATE Calculate $\nabla_{\boldsymbol{\mu}_3}\ g( \boldsymbol{\mu}_1^{(i+1)}, \boldsymbol{\mu}_2^{(i+1)}, \boldsymbol{\mu}_3 )$ by~(\ref{Eq:dual_mu3_gradient});
		\STATE Gradient Projection: repeat step 6 for $\boldsymbol{\mu}_3$;
		
		\STATE Update the index: $i \gets i + 1$;
		\UNTIL{$\|\text{MSE}^{(i)} - \text{MSE}^{(i-1)}\|\leq \epsilon$}
	\end{algorithmic}
\end{algorithm}
\setlength{\textfloatsep}{0.5cm}

As shown in \textbf{Algorithm}~\ref{solve_P1}, we simplify the projection process by optimizing the dual problem of $\boldsymbol{P1-b}$, whose intrinsic convexity ensures that the dual gap between these two problems is zero~\cite{Boyd}.
The Lagrangian function $\mathcal{L}\left( \boldsymbol{V}, \boldsymbol{\mu}_1, \boldsymbol{\mu}_2, \boldsymbol{\mu}_3 \right)$ of the problem $\boldsymbol{P1-b}$ is given by
%\vspace{-0.1cm}
\begin{align}
	\label{Eq:Lagrangian}
	\setlength\abovedisplayskip{3pt}	
	\mathcal{L} = \text{MSE}\left(\boldsymbol{V}\right) + \boldsymbol{\mu}_1^T\left( \boldsymbol{V}\boldsymbol{1}_{N_tN_r\times 1} - \boldsymbol{1}_{N\times 1} \right) - \boldsymbol{\mu}_2^T\text{vec}\left(\boldsymbol{V}\right) + \boldsymbol{\mu}_3^T\left( \text{vec}\left(\boldsymbol{V}\right) - \boldsymbol{1}_{NN_tN_r\times 1} \right),
\end{align}
\vspace{-0.8cm}

\noindent
where $\boldsymbol{\mu}_1 \in\mathbb{R}_+^{N\times 1} $, $\boldsymbol{\mu}_2 \in\mathbb{R}_+^{NN_tN_r\times 1}$, and $\boldsymbol{\mu}_3 \in\mathbb{R}_+^{NN_tN_r\times 1}$ are Lagrangian multipliers.
Next, the Lagrange dual function can be expressed as~\cite{Boyd}
\begin{equation}
	\label{Eq:LagDual_function}
	g\left( \boldsymbol{\mu}_1, \boldsymbol{\mu}_2, \boldsymbol{\mu}_3 \right) = \mathop{\text{inf}}\limits_{\boldsymbol{V}} \mathcal{L}\left( \boldsymbol{V}, \boldsymbol{\mu}_1, \boldsymbol{\mu}_2, \boldsymbol{\mu}_3 \right),
\end{equation}
where $\mathop{\text{inf}}\limits_{\boldsymbol{V}}$ denotes the infimum of the Lagrangian function with respect to $\boldsymbol{V}$.
The metric $\text{MSE}(\boldsymbol{V})$ is in the quadratic form of $\boldsymbol{V}$ according to Proposition~\ref{Proposition:MSE_convex}, implying that the minimum value of the Lagrangian function can be directly obtained by solving $\nabla_{\boldsymbol{V}} \mathcal{L}( \boldsymbol{V}^*, \boldsymbol{\mu}_1, \boldsymbol{\mu}_2, \boldsymbol{\mu}_3 ) = \boldsymbol{0}$, which is detailed in \textbf{Appendix}~\ref{App:dual function}.
Therefore, the dual optimization problem is given by
\begin{align}
	\boldsymbol{P1-c:}\ \max\limits_{\boldsymbol{\mu}_1, \boldsymbol{\mu}_2, \boldsymbol{\mu}_3}&\ g\left( \boldsymbol{\mu}_1, \boldsymbol{\mu}_2, \boldsymbol{\mu}_3 \right) \label{P1-c:Object}\\
	\text{s.t.}\ 
	& \boldsymbol{\mu}_1, \boldsymbol{\mu}_2, \boldsymbol{\mu}_3 \geq \boldsymbol{0}, \label{P1-c:mu_constraint}
%	& \boldsymbol{\mu}_1 \geq \boldsymbol{0}, \label{P1-c:mu1_constraint}\\
%	& \boldsymbol{\mu}_2 \geq \boldsymbol{0}, \label{P1-c:mu2_constraint}\\
%	& \boldsymbol{\mu}_3 \geq \boldsymbol{0}, \label{P1-c:mu3_constraint}
\end{align}
where $g( \boldsymbol{\mu}_1, \boldsymbol{\mu}_2, \boldsymbol{\mu}_3 ) = \mathcal{L}( \boldsymbol{V}^*, \boldsymbol{\mu}_1, \boldsymbol{\mu}_2, \boldsymbol{\mu}_3 )$, and $\boldsymbol{P1-c}$ is a convex optimization problem~\cite{Boyd}.
The gradient ascend policy is utilized to reach the maximum value of $\boldsymbol{P1-c}$, where derivatives with respect to $\boldsymbol{\mu}_1$, $\boldsymbol{\mu}_2$, and $\boldsymbol{\mu}_3$ are given in \textbf{Appendix}~\ref{App:gradient}.
As for the projection, fortunately, the convex domain has been greatly simplified, and the projection point in $i$-th iteration is given by $\!\!\!\!$
\begin{equation}
	\label{Eq:mu1_update}
	\boldsymbol{\mu}_j^{(i+1)} = \left[ \boldsymbol{\mu}_j^{(i)} + \alpha\nabla_{\boldsymbol{\mu}_j}\ g\left( \boldsymbol{\mu}_1, \boldsymbol{\mu}_2, \boldsymbol{\mu}_3 \right)\big|_{\boldsymbol{\mu}_j^{(i)}} \right]^+\!\!\!,\ j \in \{1, 2, 3\},
\end{equation}
where $[\ \cdot\ ]^+ = \max(\ \cdot\ ,\ 0)$ denotes the nonnegative function. 

Similarly, the PGD algorithm can also be applied to solve $\boldsymbol{P2}$.
A key difference between this problem and the former one lies in the property of the constraint~(\ref{Eq:single_power_constraint}).
The $l$-1 norm generally is non-differentiable, whereas its sub-gradient can be derived as
\begin{equation}
	\frac{\partial \| \boldsymbol{w}_{n_t} \|_1}{\partial \boldsymbol{w}_{n_t} } = \text{sgn}\left( \boldsymbol{w}_{n_t}  \right), \quad \forall\ \!  n_t \in \mathcal{T},
\end{equation}
where $\text{sgn}(\cdot)$ denotes the sign of the argument. 
Moreover, another benefit of optimizing the dual problem is the reduction of the number of variables, i.e., only $N_t + 1$ variables are needed instead of $N_tN_s$ of the original problem.

\subsection{Alternative Solutions in High/Low SNR Regimes}
The MSE minimization problem $\boldsymbol{P}$ has been solved by the proposed algorithm, which optimizes the IRS configuration matrix, the precoding matrix, and the detection matrix iteratively.
Nevertheless, two special cases in the high and low SNR regimes are considered particularly, and the asymptotic analysis of the proposed algorithm is provided in this subsection.

\textit{\textbf{1) High SNR regime:}}
The original problem can be simplified in the high-SNR regime, where a more efficient algorithm can be utilized to jointly optimize the precoding and detection matrices.
To this end, subproblems $\boldsymbol{P2}$ and $\boldsymbol{P3}$ are combined into the following problem
\begin{align}
	\boldsymbol{P4:}\ \min\limits_{\begin{subarray}{c}
			\boldsymbol{W}, \boldsymbol{Q}:(\ref{Eq:total_power_constraint}), (\ref{Eq:single_power_constraint})
	\end{subarray}}&\ \lim\limits_{\sigma_{\omega}^2\to 0}\text{MSE}(\boldsymbol{V}^{(t+1)}, \boldsymbol{W}, \boldsymbol{Q}) \label{Pro4:Object}
	%		\text{s.t.}\ & \sigma_{x}^2\left\|\boldsymbol{W}\right\|_F^2 \leq P_\text{total} - \boldsymbol{r}^T\boldsymbol{r}, \label{Pro4:sum_power}\\
	%		& \sigma_{x}I(M - 1)\| \boldsymbol{w}_{n_t} \|_1 \leq d_{n_t}, \forall\ \!  n_t \in \mathcal{T}. \label{Pro4:single_power}
\end{align}
where the objective can be calculated according to~(\ref{Eq:MSE_final}) as
\begin{align}
	\label{Eq:ZF_object}
	\lim\limits_{\sigma_{\omega}^2\to 0}\text{MSE}(\boldsymbol{V}^{(t+1)}, \boldsymbol{W}, \boldsymbol{Q}) =\ \text{tr}\left( \boldsymbol{Q}\boldsymbol{H}^{(t+1)}\boldsymbol{W}(\boldsymbol{Q}\boldsymbol{H}^{(t+1)}\boldsymbol{W})^T \right) - 2\text{tr}\left( \boldsymbol{Q}\boldsymbol{H}^{(t+1)}\boldsymbol{W} \right) + N_s.
\end{align}

The SVD of $\boldsymbol{H}^{(t+1)}$ is carried out as $\boldsymbol{H}^{(t+1)} = \widetilde{\boldsymbol{U}}^{(t+1)} \boldsymbol{\Lambda}^{(t+1)} \widetilde{\boldsymbol{V}}^{(t+1)T}$, where $\boldsymbol{\Lambda}^{(t+1)}$ is a diagonal matrix with decreasing singular values, and $\widetilde{\boldsymbol{U}}^{(t+1)} = [\widetilde{\boldsymbol{u}}_1^{(t+1)}, \cdots, \widetilde{\boldsymbol{u}}_{N_r}^{(t+1)}]^T$ and $\widetilde{\boldsymbol{V}}^{(t+1)}$ are the left singular matrix and right singular matrix, respectively.
Then, a feasible precoding matrix can be expressed as
\begin{equation}
	\label{Eq:JointOptimize_precoding}
	\boldsymbol{W}_{ZF}^{(t+1)} = \zeta \widetilde{\boldsymbol{V}}^{(t+1)} \boldsymbol{\Lambda}^{(t+1)\dagger} [\widetilde{\boldsymbol{u}}_1^{(t+1)}, \cdots, \widetilde{\boldsymbol{u}}_{N_s}^{(t+1)}],
\end{equation}
where $\boldsymbol{\Lambda}^{(t+1)\dagger}$ represents the transpose of $\boldsymbol{\Lambda}^{(t+1)}$ with each element in the reciprocal form and the coefficient $\zeta$ is a constant given by
\begin{align}
	\label{Eq:JointOptimize_min}
	\zeta = \min \Bigg( \sqrt{\frac{P_\text{total} - \boldsymbol{r}^T\boldsymbol{r}}{\sigma_{x}^2 \left\|  \boldsymbol{\Lambda}^{(t+1)\dagger}[\widetilde{\boldsymbol{u}}_1^{(t+1)}, \cdots, \widetilde{\boldsymbol{u}}_{N_r}^{(t+1)}] \right\|_F^2 }},  \frac{\min\limits_{n_t\in \mathcal{T}} d_{n_t} / \sqrt{\sigma_{x}^2}}{I(M-1) \left\| \widetilde{\boldsymbol{V}}^{(t+1)} \boldsymbol{\Lambda}^{(t+1)\dagger} [\widetilde{\boldsymbol{u}}_1^{(t+1)}, \cdots, \widetilde{\boldsymbol{u}}_{N_s}^{(t+1)}] \right\|_r} \Bigg),
\end{align}
which is determined to satisfy the power constraints in~(\ref{Eq:total_power_constraint}) and~(\ref{Eq:single_power_constraint}).
In the meanwhile, the detection matrix $\boldsymbol{Q}$ can be chosen as
\begin{equation}
	\label{Eq:JointOptimize_detection}
	\boldsymbol{Q}_{ZF}^{(t+1)} = \zeta^{-1}[\boldsymbol{I}_{N_s}, \boldsymbol{0}_{N_s\times (N_r - N_s)}].
\end{equation}

\begin{proposition}
	\label{Propostions:}
	In the high-SNR regime, the proposed precoding matrix $\boldsymbol{W}_{ZF}^{(t+1)}$ and the detection matrix $\boldsymbol{Q}_{ZF}^{(t+1)}$ can converge to the optimal solution of $\boldsymbol{P4}$.
\end{proposition}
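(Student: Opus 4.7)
The plan is to show that the proposed $(\boldsymbol{W}_{ZF}^{(t+1)}, \boldsymbol{Q}_{ZF}^{(t+1)})$ attains the zero lower bound of the high-SNR objective in $\boldsymbol{P4}$ by implementing exact channel inversion, i.e., by making $\boldsymbol{Q}\boldsymbol{H}^{(t+1)}\boldsymbol{W} = \boldsymbol{I}_{N_s}$. First, I would rewrite the objective in~(\ref{Eq:ZF_object}) in its completed-square form $\mathrm{tr}\bigl((\boldsymbol{Q}\boldsymbol{H}^{(t+1)}\boldsymbol{W} - \boldsymbol{I}_{N_s})(\boldsymbol{Q}\boldsymbol{H}^{(t+1)}\boldsymbol{W} - \boldsymbol{I}_{N_s})^T\bigr)$, which is the Frobenius norm squared of $\boldsymbol{Q}\boldsymbol{H}^{(t+1)}\boldsymbol{W} - \boldsymbol{I}_{N_s}$ and therefore non-negative for any feasible $(\boldsymbol{W}, \boldsymbol{Q})$, with equality if and only if $\boldsymbol{Q}\boldsymbol{H}^{(t+1)}\boldsymbol{W} = \boldsymbol{I}_{N_s}$. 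This fixes zero as the global lower bound of $\boldsymbol{P4}$.

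Second, I would substitute the SVD $\boldsymbol{H}^{(t+1)} = \widetilde{\boldsymbol{U}}^{(t+1)}\boldsymbol{\Lambda}^{(t+1)}\widetilde{\boldsymbol{V}}^{(t+1)T}$ into the product $\boldsymbol{Q}_{ZF}^{(t+1)}\boldsymbol{H}^{(t+1)}\boldsymbol{W}_{ZF}^{(t+1)}$ and simplify: the factor $\widetilde{\boldsymbol{V}}^{(t+1)T}\widetilde{\boldsymbol{V}}^{(t+1)}$ collapses to the identity by orthonormality of the right singular matrix, $\boldsymbol{\Lambda}^{(t+1)}\boldsymbol{\Lambda}^{(t+1)\dagger}$ collapses to the identity on the top-$N_s$ block since $\boldsymbol{\Lambda}^{(t+1)\dagger}$ reciprocates the singular values, and orthonormality of $\widetilde{\boldsymbol{U}}^{(t+1)}$ together with the selector $[\boldsymbol{I}_{N_s}, \boldsymbol{0}_{N_s\times(N_r-N_s)}]$ leaves precisely $\boldsymbol{I}_{N_s}$. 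The $\zeta$ and $\zeta^{-1}$ coefficients cancel in the process. Combining with the first step, the high-SNR MSE of the proposed pair is exactly zero, attaining the lower bound and hence the infimum of $\boldsymbol{P4}$.

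Third, I would verify feasibility of $(\boldsymbol{W}_{ZF}^{(t+1)}, \boldsymbol{Q}_{ZF}^{(t+1)})$. Computing $\|\boldsymbol{W}_{ZF}^{(t+1)}\|_F^2$ using orthonormality of $\widetilde{\boldsymbol{V}}^{(t+1)}$ reduces it to $\zeta^2 \|\boldsymbol{\Lambda}^{(t+1)\dagger}[\widetilde{\boldsymbol{u}}_1^{(t+1)}, \ldots, \widetilde{\boldsymbol{u}}_{N_r}^{(t+1)}]\|_F^2$, and an analogous simplification applies to the row norm $\|\boldsymbol{W}_{ZF}^{(t+1)}\|_r$. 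The two arguments of the outer $\min(\cdot,\cdot)$ in~(\ref{Eq:JointOptimize_min}) are then exactly the largest $\zeta$ for which the sum-power constraint~(\ref{Eq:total_power_constraint}) and the per-LED amplitude constraint~(\ref{Eq:single_power_constraint}) hold with equality; taking the minimum ensures both are simultaneously satisfied, so the proposed pair is feasible and, by the preceding step, optimal.

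The main obstacle, and the place where care is needed, is the implicit assumption that $\boldsymbol{\Lambda}^{(t+1)\dagger}$ is well-defined on the leading $N_s$ singular values, i.e., that $\mathrm{rank}(\boldsymbol{H}^{(t+1)}) \geq N_s$. I would flag this as a prerequisite and argue that it holds generically once the IRS association matrices $\boldsymbol{F}, \boldsymbol{G}$ have been chosen to improve the conditioning of the MIMO VLC channel, which is precisely the motivation for deploying the IRS here. I would also briefly justify the "converge" wording by noting that the construction eliminates the signal-distortion component entirely, so the only contribution to the residual MSE in the non-asymptotic regime is $\mathrm{tr}(\boldsymbol{Q}_{ZF}^{(t+1)}\boldsymbol{R}_{\boldsymbol{\omega}}\boldsymbol{Q}_{ZF}^{(t+1)T}) = O(\sigma_\omega^2)$, which vanishes as $\sigma_\omega^2 \to 0$.
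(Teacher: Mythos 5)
Your proposal is correct and takes essentially the same approach as the paper's proof: both identify $\boldsymbol{Q}\boldsymbol{H}^{(t+1)}\boldsymbol{W} = \boldsymbol{I}_{N_s}$ as the unconstrained minimizer of the high-SNR objective (the paper by setting the gradient of $\boldsymbol{B} = \boldsymbol{Q}\boldsymbol{H}^{(t+1)}\boldsymbol{W}$ to zero, you by completing the square), then verify that the ZF construction achieves this identity and that the choice of $\zeta$ in~(\ref{Eq:JointOptimize_min}) guarantees feasibility. Your explicit flagging of the prerequisite $\rank(\boldsymbol{H}^{(t+1)}) \geq N_s$ is a useful refinement that the paper leaves implicit.
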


\begin{proof}
	According to~(\ref{Eq:ZF_object}), the MSE metric can be rewritten as $\text{tr}(\boldsymbol{B}\boldsymbol{B}^T) - 2\text{tr}(\boldsymbol{B}) + N_s$ with $\boldsymbol{B} = \boldsymbol{Q}\boldsymbol{H}^{(t)}\boldsymbol{W}$, which is a convex function with respect to $\boldsymbol{B}$ due to Lemma~\ref{Lemma:MSE_convex_H}. 
	When constraints of~(\ref{Eq:total_power_constraint}) and~(\ref{Eq:single_power_constraint}) are ignored, the minimum MSE can be obtained by directly forcing the derivative of $\boldsymbol{B}$ equal to $\boldsymbol{0}$, i.e., $\boldsymbol{B} - \boldsymbol{I}_{N_s} = \boldsymbol{0}$.
	Considering that $\boldsymbol{W}_{ZF}^{(t+1)}$ is composed of the first $N_s$ columns of the right pseudo-inverse matrix of $\boldsymbol{H}^{(t)}$, it is easy to confirm that $\boldsymbol{Q}_{ZF}^{(t+1)} \boldsymbol{H}^{(t)} \boldsymbol{W}_{ZF}^{(t+1)} = \boldsymbol{I}_{N_s}$.
	Furthermore, emission power constraints can be ensured due to the proper setting of $\zeta$ in~(\ref{Eq:JointOptimize_min}), implying that the proposed precoding matrix $\boldsymbol{W}_{ZF}^{(t+1)}$ and detection matrix $\boldsymbol{Q}_{ZF}^{(t+1)}$ will converge to the optimal solution of $\boldsymbol{P4}$.
\end{proof}

\textit{\textbf{2) Low SNR regime:}}
When the noise power is sufficiently high as compared to the signal power, the optimal detection matrix can be expressed according to~(\ref{Eq:Pro3_resultB}) as
\begin{equation}
	\label{Eq:B_lowSNR}
	\boldsymbol{Q}^{(t+1)} = \boldsymbol{R}_{\boldsymbol{x}}\left( \boldsymbol{H}^{(t+1)}\boldsymbol{W}^{(t+1)} \right)^T  \boldsymbol{R}_{\boldsymbol{\omega}}^{-1} = \frac{\sigma_{x}^2}{\sigma_{\omega}^2} \left( \boldsymbol{H}^{(t+1)}\boldsymbol{W}^{(t+1)} \right)^T.
\end{equation}
Then, the asymptotic analysis of the MSE performance versus the SNR at the receiver is provided in the following, where the SNR is defined as
\begin{equation}
	\label{Eq:}
	\widehat{\text{SNR}} = \frac{\mathbb{E}\left[\left\| \boldsymbol{H}\boldsymbol{W}\boldsymbol{x} \right\|_2^2 \right]}{\mathbb{E}\left[\left\| \boldsymbol{\omega} \right\|_2^2 \right]} = \frac{\sigma_{x}^2\| \boldsymbol{H}\boldsymbol{W} \|_F^2}{\sigma_{\omega}^2N_r}.
\end{equation}

\begin{lemma}
	\label{Lemma:cauthy}
	$\boldsymbol{a}^T\cdot\boldsymbol{b} \leq \lvert \boldsymbol{a} \rvert \cdot \lvert \boldsymbol{b} \rvert$ for any two vectors $\boldsymbol{a}$ and $\boldsymbol{b}$ (Cauchy-Schwarz inequality~\cite{bubeck2014convex}).
\end{lemma}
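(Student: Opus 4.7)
The plan is to prove the Cauchy-Schwarz inequality by exploiting the nonnegativity of the squared Euclidean norm of a real linear combination of $\boldsymbol{a}$ and $\boldsymbol{b}$, which is the classical one-line route and completely self-contained. First I would dispose of the degenerate case: if $\boldsymbol{a} = \boldsymbol{0}$ (or $\boldsymbol{b}=\boldsymbol{0}$), both sides of the inequality vanish and the statement is trivially true, so I may assume $\boldsymbol{a}\neq\boldsymbol{0}$ in what follows. I will also interpret the paper's notation $|\boldsymbol{a}|$ as the Euclidean norm $\|\boldsymbol{a}\|_2 = \sqrt{\boldsymbol{a}^T\boldsymbol{a}}$, matching the way the lemma is subsequently invoked.

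Next I would introduce a real scalar $t$ and form the polynomial
\begin{equation}
p(t) \;=\; \|t\boldsymbol{a}+\boldsymbol{b}\|_2^2 \;=\; t^2\,\boldsymbol{a}^T\boldsymbol{a} \;+\; 2t\,\boldsymbol{a}^T\boldsymbol{b} \;+\; \boldsymbol{b}^T\boldsymbol{b},
\end{equation}
which is nonnegative for every $t\in\mathbb{R}$ because it is the squared norm of a real vector. Under the assumption $\boldsymbol{a}\neq\boldsymbol{0}$ the leading coefficient $\boldsymbol{a}^T\boldsymbol{a}$ is strictly positive, so $p$ is a genuine quadratic bounded below by zero. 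Consequently its discriminant must be nonpositive, that is, $(2\boldsymbol{a}^T\boldsymbol{b})^2 - 4(\boldsymbol{a}^T\boldsymbol{a})(\boldsymbol{b}^T\boldsymbol{b})\le 0$. Rearranging gives $(\boldsymbol{a}^T\boldsymbol{b})^2 \le \|\boldsymbol{a}\|_2^2\,\|\boldsymbol{b}\|_2^2$, and taking the nonnegative square root yields $|\boldsymbol{a}^T\boldsymbol{b}|\le \|\boldsymbol{a}\|_2\,\|\boldsymbol{b}\|_2$, from which the one-sided statement $\boldsymbol{a}^T\boldsymbol{b}\le \|\boldsymbol{a}\|_2\,\|\boldsymbol{b}\|_2$ follows because any real number is dominated by its absolute value.

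There is essentially no hard step in this proof; the lemma is a textbook fact invoked as a building block for the subsequent low-SNR analysis. The only point that requires a moment of care is bookkeeping in the degenerate case $\boldsymbol{a}=\boldsymbol{0}$, since the quadratic argument requires a strictly positive leading coefficient. An equivalent alternative, which I would mention only if brevity permits, is to choose the minimizing value $t^\star = -\boldsymbol{a}^T\boldsymbol{b}/\|\boldsymbol{a}\|_2^2$ and substitute it into $p(t^\star)\ge 0$ to obtain the same bound directly, thereby avoiding the discriminant computation entirely.
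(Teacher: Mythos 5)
Your proof is correct: the discriminant (or equivalently, the minimizing $t^\star$) argument is the standard textbook derivation of the Cauchy--Schwarz inequality, and you handle the degenerate case $\boldsymbol{a}=\boldsymbol{0}$ properly. The paper itself offers no proof at all — it states the lemma as a known fact with a citation to the literature — so there is nothing to compare against; your self-contained argument is a strictly more complete treatment than what appears in the paper, and your reading of $\lvert\boldsymbol{a}\rvert$ as the Euclidean norm is consistent with how the lemma is invoked in the bound of~(\ref{Eq:MSE_lowSNR}).
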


The normalized MSE, namely $\text{MSE}( \boldsymbol{F}, \boldsymbol{G}, \boldsymbol{W}, \boldsymbol{Q})/\boldsymbol{R}_{\boldsymbol{x}}$, is used to evaluate the MSE performance, which is given by
\begin{align}
	\label{Eq:MSE_lowSNR}
	\overline{\text{MSE}} &= 1 - \left( 2\text{tr}\left( \boldsymbol{Q}\boldsymbol{H}\boldsymbol{W} \right) - \| \boldsymbol{Q}\boldsymbol{H}\boldsymbol{W} \|_F^2 - \frac{\sigma_{\omega}^2}{\sigma_{x}^2}\| \boldsymbol{Q} \|_F^2 \right) \notag\\
	& = 1 - \frac{N_r}{N_s} \left( \frac{\sigma_{x}^2\| \boldsymbol{H}\boldsymbol{W} \|_F^2}{\sigma_{\omega}^2N_r} - \frac{\sigma_{x}^4}{\sigma_{\omega}^4N_r} \| \left( \boldsymbol{H}\boldsymbol{W} \right)^T\boldsymbol{H}\boldsymbol{W} \|_F^2 \right) \notag\\
	& \leq 1 - \frac{N_r}{N_s} \left( \frac{\sigma_{x}^2\| \boldsymbol{H}\boldsymbol{W} \|_F^2}{\sigma_{\omega}^2N_r} - \frac{\sigma_{x}^4}{\sigma_{\omega}^4N_r} \| \boldsymbol{H}\boldsymbol{W} \|_F^4 \right),
\end{align}
where the inequality comes from Lemma~\ref{Lemma:cauthy}.
Therefore, the normalized MSE in the low-SNR regime can be reformulated as
\begin{equation}
	\label{Eq:LowSNR_normalMSE}
	\overline{\text{MSE}} \approx 1 - \frac{N_r}{N_s} \left( \widehat{\text{SNR}} - o\left(\widehat{\text{SNR}}\right) \right),
\end{equation}
where $o(\cdot)$ denotes the higher order quantity.
It can be observed that $\overline{\text{MSE}}$ decreases linearly with the increase of SNR, and the slope is $N_r/N_s$.

\subsection{Theoretical Analysis of Proposed Algorithms}
\label{Subsec:analysis}
The convergence of the proposed AO algorithm is guaranteed due to the convexity of three subproblems.
Specifically, the relaxed IRS configuration matrix, the precoding matrix, and the detection matrix in the $t$-th iteration are represented by $\boldsymbol{V}^{(t)}$, $\boldsymbol{W}^{(t)}$, and $\boldsymbol{Q}^{(t)}$, respectively.
Then in the next iteration, $\boldsymbol{V}^{(t+1)}$ is obtained by optimizing the convex optimization problem $\boldsymbol{P1-b}$ and thereby we have $\text{MSE}(\boldsymbol{V}^{(t+1)}, \boldsymbol{W}^{(t)}, \boldsymbol{Q}^{(t)}) \leq \text{MSE}(\boldsymbol{V}^{(t)}, \boldsymbol{W}^{(t)}, \boldsymbol{Q}^{(t)})$.
Similarly, it can also be confirmed that $\text{MSE}(\boldsymbol{V}^{(t+1)}, \boldsymbol{W}^{(t+1)}, \boldsymbol{Q}^{(t+1)}) \leq \text{MSE}(\boldsymbol{V}^{(t+1)}, \boldsymbol{W}^{(t+1)}, \boldsymbol{Q}^{(t)}) \leq \text{MSE}(\boldsymbol{V}^{(t+1)}, \boldsymbol{W}^{(t)}, \boldsymbol{Q}^{(t)})$ since both $\boldsymbol{P2}$ and $\boldsymbol{P3}$ are convex optimization problems.
Finally, the MSE will converge to a certain nonnegative value since it is lower-bounded by 0.

The complexity of the proposed algorithm depends on the number of iterations in $\boldsymbol{P1}$-$\boldsymbol{P3}$ and the number of matrix operations in each iteration.
Define the number of outer loops of \textbf{Algorithm}~\ref{solve_P} and the numbers of inner loops to problems of $\boldsymbol{P1}$ and $\boldsymbol{P2}$ are $I_{O}$, $I_1$, and $I_2$, respectively.
The standard big-O metric $\mathcal{O}(\cdot)$ is used to evaluate the computational complexity.
According to~\cite{bubeck2014convex}, $I_1$ is in the order of $\mathcal{O}(\log(1/\epsilon))$ with $\epsilon$ denoting the tolerance error since the Hessian matrix of the objective shown in~(\ref{Eq:MSE_quadratic_proof}) is a positive definite matrix.
This complexity is far lower than $\mathcal{O}(1/\epsilon)$ in the gradient descent algorithm~\cite{bubeck2014convex}, showing the superior efficiency of the PGD algorithm.
Moreover, the second-order derivative of the objective in $\boldsymbol{P2}$ is given by
\begin{align}
	\label{Eq:Pro2_MSE_quadratic}
	\nabla_{\text{vec}(\boldsymbol{W})}^2 \text{MSE}(\boldsymbol{V}^{(t+1)}, \boldsymbol{W}, \boldsymbol{Q}^{(t)}) =&\ \boldsymbol{R}_{\boldsymbol{x}} \otimes \left( \boldsymbol{H}^{(t+1)T}\boldsymbol{Q}^{(t)T}\boldsymbol{Q}^{(t)}\boldsymbol{H}^{(t+1)} \right) \notag\\
	=&\ \left( \boldsymbol{R}_{\boldsymbol{x}}^{\frac{1}{2}} \otimes \boldsymbol{Q}^{(t)}\boldsymbol{H}^{(t+1)} \right)^T \left(\boldsymbol{R}_{\boldsymbol{x}}^{\frac{1}{2}} \otimes \boldsymbol{Q}^{(t)}\boldsymbol{H}^{(t+1)} \right) \succ\ \lambda_{\min} \boldsymbol{I}_{N_tN_s},
\end{align}
where $\lambda_{\min}$ is the minimum eigenvalue of the Hessian matrix and $I_2$ is also in the order of $\mathcal{O}(\log(1/\epsilon))$ due to the $\lambda_{\min}$-strong convexity~\cite{bubeck2014convex}.

In each iteration, 9 matrix multiplications and 1 matrix inversion are needed to calculate the Lagrange dual function, and 10 matrix multiplications are used to update Lagrangian multipliers for optimizing $\boldsymbol{P1-b}$.
Then, $\boldsymbol{P2}$ is also optimized by the PGD algorithm, where the number of constraints is reduced from $N^2N_tN_r$ to $N_t + 1$ as compared to $\boldsymbol{P1-b}$.
The optimal solution of $\boldsymbol{P3}$ can be directly obtained from~(\ref{Eq:Pro3_resultB}), which requires 6 matrix multiplications, 1 matrix addition, and 1 matrix inversion.
Finally, the complexity of the overall proposed algorithm is $\mathcal{O}(I_{O}(10I_1 + 6I_2 + 19) + NN_tN_r)$, where the last term is due to the operations for recovering $\boldsymbol{F}$ and $\boldsymbol{G}$ from the optimized $\boldsymbol{V}$.

%****************************************************仿真******************************************************************
\section{Numerical Results}
\label{Sec:Num}
In this section, numerical results are provided to evaluate the performance of the proposed algorithm for minimizing the MSE of the IRS-aided MIMO VLC system.
To this end, the downlink communication of an indoor MIMO VLC system is considered, where the transmitter equipped with $16$ LEDs and the receiver equipped with $4$ PDs are both set as uniform planar arrays (UPAs). 
Suppose the room size is $8$m $\times$ $8$m $\times$ $3$m, the coordinates of LEDs are generated by dividing the ceiling into $16$ equal squares and placing LEDs in the centers of them.
Without loss of generality, the center of the UPA receiver is located at $(2.0 \text{m},\ 3.2 \text{m},\ 1.0 \text{m})$ of the 3D coordinate system, and IRS units are deployed equally within the rectangular determined by corners of $(0.0 \text{m},\ 1.0 \text{m},\ 1.2 \text{m})$ and $(0.0 \text{m},\ 7.0 \text{m},\ 2.9 \text{m})$.
More detailed settings are listed in Table~\ref{Tab:Parameters}, including the VLC channel parameters, the specific modulation scheme, etc.
Since the CSI is assumed known to the transceiver, the channel gain in indoor VLC scenario is assumed to be quasi-static and these parameters remain unchanged in our simulations.
\begin{table*}[t]
	\centering
	\small
	\caption{Simulation parameters}
	\label{parameters}
	\begin{tabular}{| l | l || l | l |}
		\hline
		\multicolumn{1}{|c|}{\textbf{Parameters}} &
		\multicolumn{1}{c||}{\textbf{Values}} & \multicolumn{1}{c|}{\textbf{Parameters}} & \multicolumn{1}{c|}{\textbf{Values}} \\
		\hline
		The number of LEDs, $N_t$ & 16 & The Lambertian index, $m$ & $1$ \\
		The number of PDs, $N_r$ & 4 & The reflectivity of each IRS unit, $\gamma$ & $0.9$ \\
		The number of independent data streams, $N_s$ & 4 & Optical filter gain, $g_{of}$ & $1$ \\
		The number of IRS units, $N$ & 64 & The area of a single PD, $A_p$ & $1\ \text{cm}^2$ \\
		The area of an IRS unit & $10\ \text{cm}^2$  & The PD spacing & $0.2$ m \\
		The operating height of the receiver & $1$ m & Semi-angle of the FoV, $\Phi_0$ & $60\degree$ \\
		The order of bipolar PAM, $M$ & $4$ & Refractive index of the PD, $q$ & $1.5$ \\
		Direct current bias, $\boldsymbol{r}$ & $\boldsymbol{1}_{N_t \times 1}$ & Tolerance error, $\epsilon$ & $10^{-6}$ \\
		Total emission power constraint, $P_{\text{total}}$ & $160$ W & The power of AWGN at the receiver, $\sigma_{\omega}^2$ & $10^{-14}$ \\
		\hline
	\end{tabular}
	\label{Tab:Parameters}
\end{table*}

The performance of the proposed AO algorithm is compared to the following alternative IRS configuration schemes: 
(1) Distance greedy IRS scheme: IRS configuration is carried out by matching each IRS unit and transceivers with the minimum distance, i.e., $f_{n, n_r} = 1$ and $g_{n, n_t} = 1$ denote that the $n$-th IRS unit has the shortest distances with the $n_r$-th PD and the $n_t$-th LED, respectively; 
(2) Random IRS scheme: The binary variable $\boldsymbol{V}$ in $\boldsymbol{P1}$ is randomly generated, and each simulation point is the average result of $5000$ independent experiments;
(3) No IRS scheme: In this scheme, the number of IRS units is 0, and therefore the MIMO VLC channel gain is given as $\boldsymbol{H} = \boldsymbol{H}_1$.
On the other hand, two commonly adopted precoding solutions are considered to show the performance of the proposed precoding scheme: 
(1) ZF precoding scheme: The precoding matrix $\boldsymbol{W}$ is determined by the right pseudo-inverse matrix of the MIMO VLC gain matrix $\boldsymbol{H}$, and the Frobenius norm of $\boldsymbol{W}$ needs to meet the emission power constraints of~(\ref{Eq:total_power_constraint}) and~(\ref{Eq:single_power_constraint});
(2) MMSE precoding scheme: The precoding matrix is jointly determined by $\boldsymbol{H}$ and the AWGN power $\sigma_{\omega}^2$ for maximizing the signal-to-interference-plus-noise ratio (SINR) at the receiver. 
Moreover, the lower bound of MSE can be obtained by solving the relaxed form of $\boldsymbol{P}$, which is considered as the optimal baseline.

\begin{figure}[t]
	\centering
	\begin{minipage}[b]{0.43\linewidth}
		\centering
		\includegraphics[width=1\textwidth]{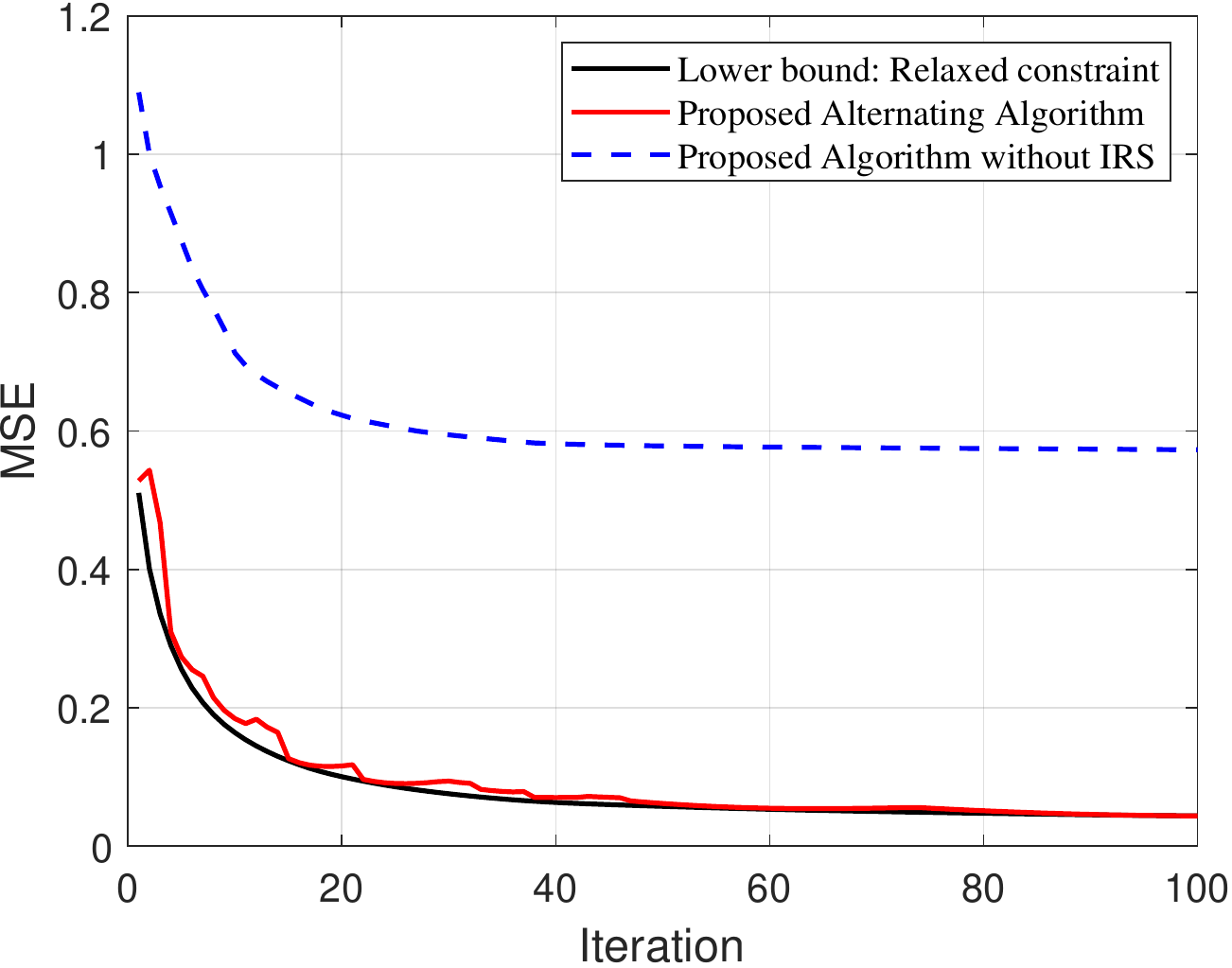}
		\caption{Convergence performance of the proposed AO algorithm.}
		\label{Fig:Iteration}
	\end{minipage}
	\hspace{0.3cm}
	\begin{minipage}[b]{0.43\linewidth}
		\centering
		\includegraphics[width=1\textwidth]{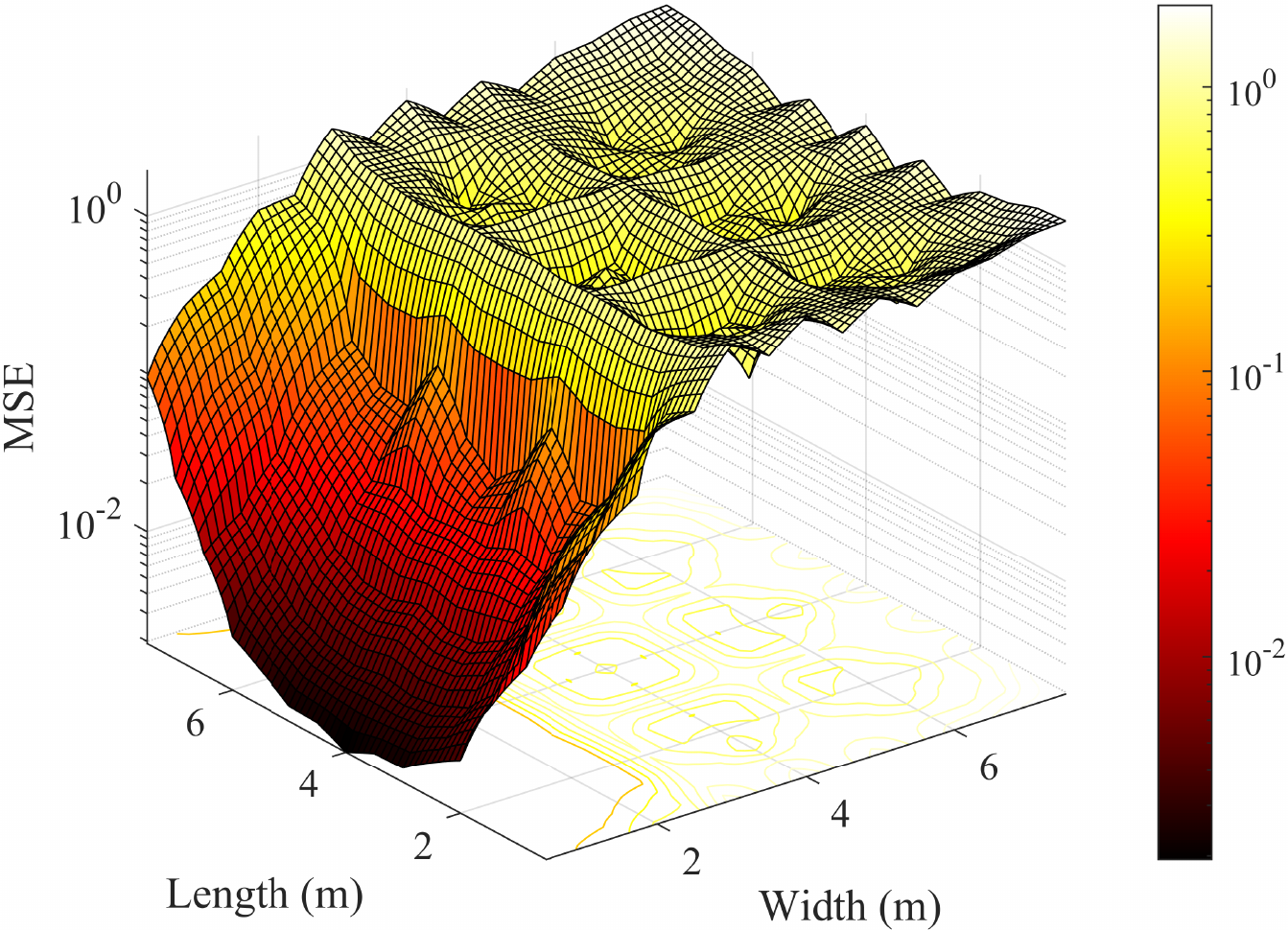}
		\caption{MSE performance at different locations of the room when an IRS is deployed on the wall of width = 0.}
		\label{Fig:Map}
	\end{minipage}
	%	\vspace{-0.2cm}
\end{figure}
%\begin{figure}[t]
%	\centering
%	%	\includegraphics[width=0.45\textwidth]{Iteration.pdf}
%	\includegraphics[width=0.55\textwidth]{Iteration.pdf}
%	\caption{The convergence performance of the proposed AO algorithm.}
%	\label{Fig:Iteration}
%\end{figure}
%\begin{figure}[t]
%	\centering
%	%	\includegraphics[width=0.45\textwidth]{MSEmap.pdf}
%	\includegraphics[width=0.55\textwidth]{MSEmap.pdf}
%	\caption{The MSE performance at different locations of the room when an IRS is deployed on the wall of width = 0.}
%	\label{Fig:Map}
%\end{figure}

% 介绍一次迭代，并绘制三维图
To start with, the convergence performance of the proposed algorithm is illustrated in Fig.~\ref{Fig:Iteration}, where $64$ IRS units facilitate the MIMO VLC system for MSE reduction.
The correlation matrices of the signal vector and noise vector are $\boldsymbol{R}_{\boldsymbol{x}} =\boldsymbol{I}_{N_s}$ and $\boldsymbol{R}_{\boldsymbol{\omega}} =10^{-14} \boldsymbol{I}_{N_r}$, respectively.
As shown in this figure, the MSE of each scheme converges to a positive value after around $40$ iterations. 
Notably, the MSE of the proposed algorithm is smaller than that of the no-IRS scheme, demonstrating the effectiveness of the optical IRS to reduce MSE.
Moreover, considering that the VLC channel gain depends on the locations of the transceivers, we investigate the MSE performance at different locations in Fig.~\ref{Fig:Map}.
It is observed that the MSE at the locations close to the IRS can be less than $10^{-2}$, while the MSE is larger than $10^{-1}$ in the center of the room.
This result shows that the MSE of MIMO VLC can be improved significantly by the optical IRS, while the effect of IRS becomes weaker when the receiver is away from the IRS.

% 介绍SNR的影响
Next, we investigate the performance of the proposed AO algorithm versus different SNR values.
Specifically, two metrics, namely the MSE calculated by~(\ref{Eq:MSE_final}) and the BER obtained by averaging $10^5$ independent experiments, are simulated under the defined $\text{SNR} \triangleq 10^{-13}\sigma_{x}^2/\sigma_{\omega}^2$.
As shown in Fig.~\ref{Fig:MSEvsSNR}, the MSEs of the IRS-aided schemes are lower than the corresponding no-IRS schemes.
Moreover, the proposed algorithm also outperforms other baselines for precoding and signal detection.
The MSE reduction of the proposed AO algorithm is mainly due to two factors, namely the IRS gain maximization and the joint optimization of IRS configuration with precoding and detection.
In addition, the BER performance is investigated in Fig.~\ref{Fig:BERvsSNR}, where the AWGN channel model is adopted and several baselines are given for comparison.
The result shows that the BER of the proposed AO algorithm decreases rapidly with the increase of SNR, and the SNR gaps between it and other IRS-aided schemes are more than $5$ dB, revealing that the proposed algorithm is more effective for reducing the BER.
\begin{figure}[t]
	\centering
	\begin{minipage}[b]{0.45\linewidth}
		\centering
		\includegraphics[width=1\textwidth]{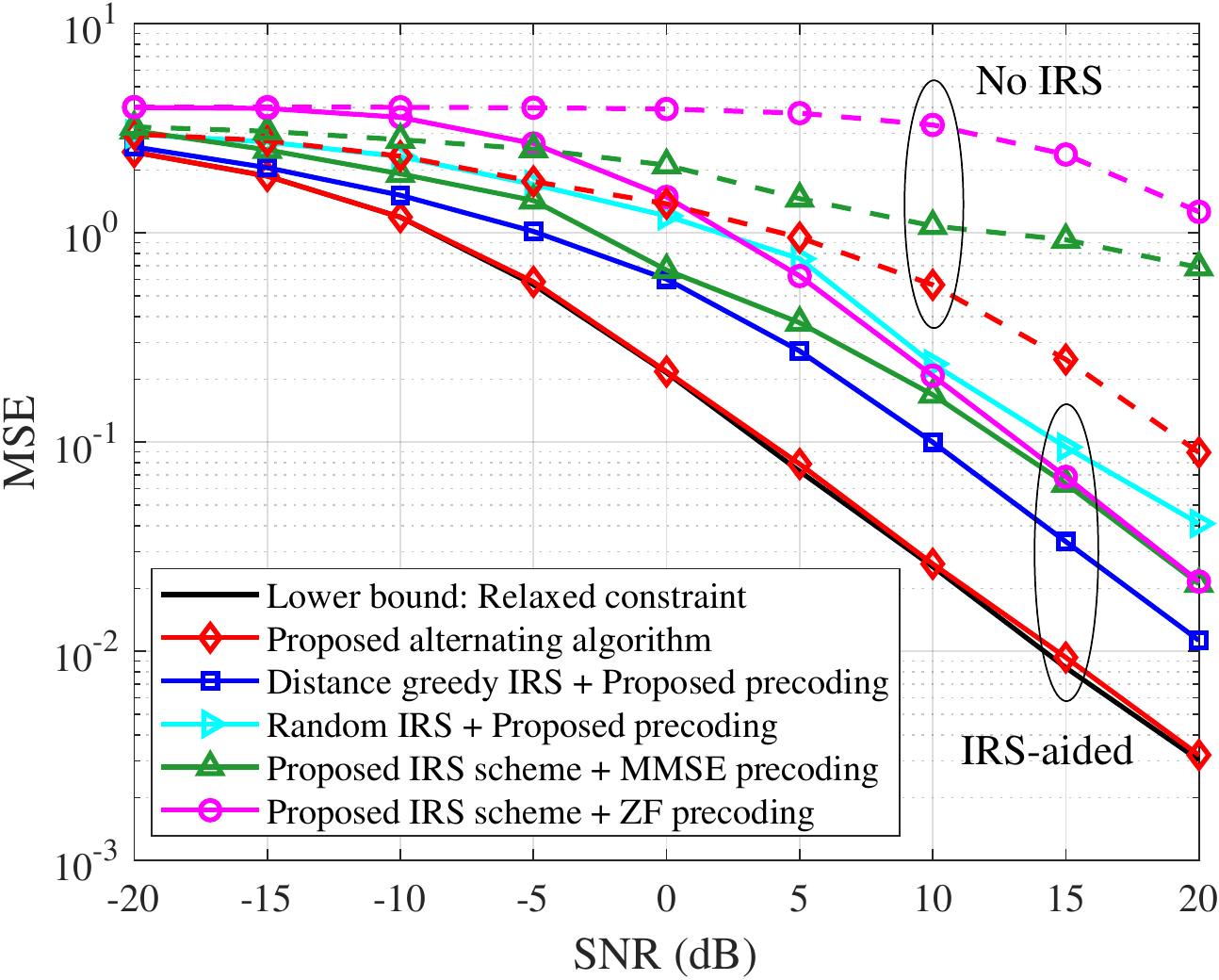}
		\caption{MSE performance of the proposed AO algorithm.}
		\label{Fig:MSEvsSNR}
	\end{minipage}
	\hspace{0.25cm}
	\begin{minipage}[b]{0.45\linewidth}
		\centering
		\includegraphics[width=1\textwidth]{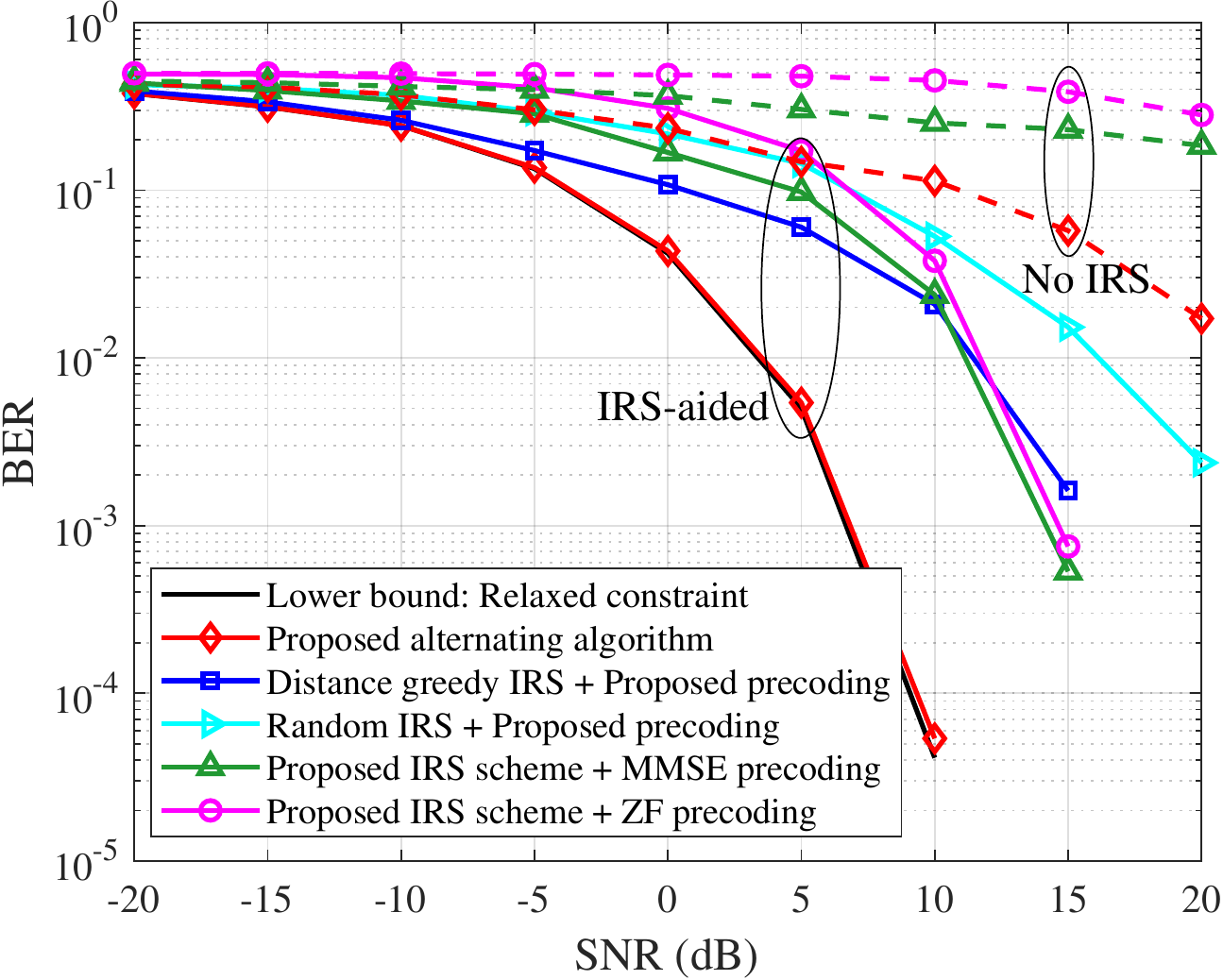}
		\caption{BER performance of the proposed AO algorithm.}
		\label{Fig:BERvsSNR}
	\end{minipage}
	%	\vspace{-0.2cm}
\end{figure}
%\begin{figure}[t]
%	\centering
%%	\includegraphics[width=0.45\textwidth]{MSEvsSNR.pdf}
%	\includegraphics[width=0.55\textwidth]{MSEvsSNR.pdf}
%	\caption{The MSE performance of the proposed AO algorithm.}
%	\label{Fig:MSEvsSNR}
%\end{figure} 
%\begin{figure}[t]
%	\centering
%%	\includegraphics[width=0.45\textwidth]{BERvsSNR.pdf}
%	\includegraphics[width=0.55\textwidth]{BERvsSNR.pdf}
%	\caption{The BER performance of the proposed AO algorithm.}
%	\label{Fig:BERvsSNR}
%\end{figure}

% 介绍N的影响
The number of IRS units $N$ critically determines the performance of IRS~\cite{zhang2020capacity}.
In Fig.~\ref{Fig:Number}, the MSE of several baselines, including ZF/MMSE precoding schemes and random IRS scheme, are simulated under different $N$ that varies from $0$ to $96$.
%Specifically, when the number of IRS units is $48$, the MSE decrements from the assistance of IRS and the coding scheme are nearly $6$ dB and $7.8$ dB, respectively.
As discussed before, the MSE gain of IRS-aided MIMO VLC mainly comes from the IRS power gain and the precoding gain.
The result shows that the precoding gain is insensitive to $N$ while the IRS power gain increases with $N$.
This is reasonable since the MIMO VLC channel $\boldsymbol{H}$ becomes stronger with more IRS units.
\begin{figure}[t]
	\centering
	\begin{minipage}[b]{0.45\linewidth}
		\centering
		\includegraphics[width=1\textwidth]{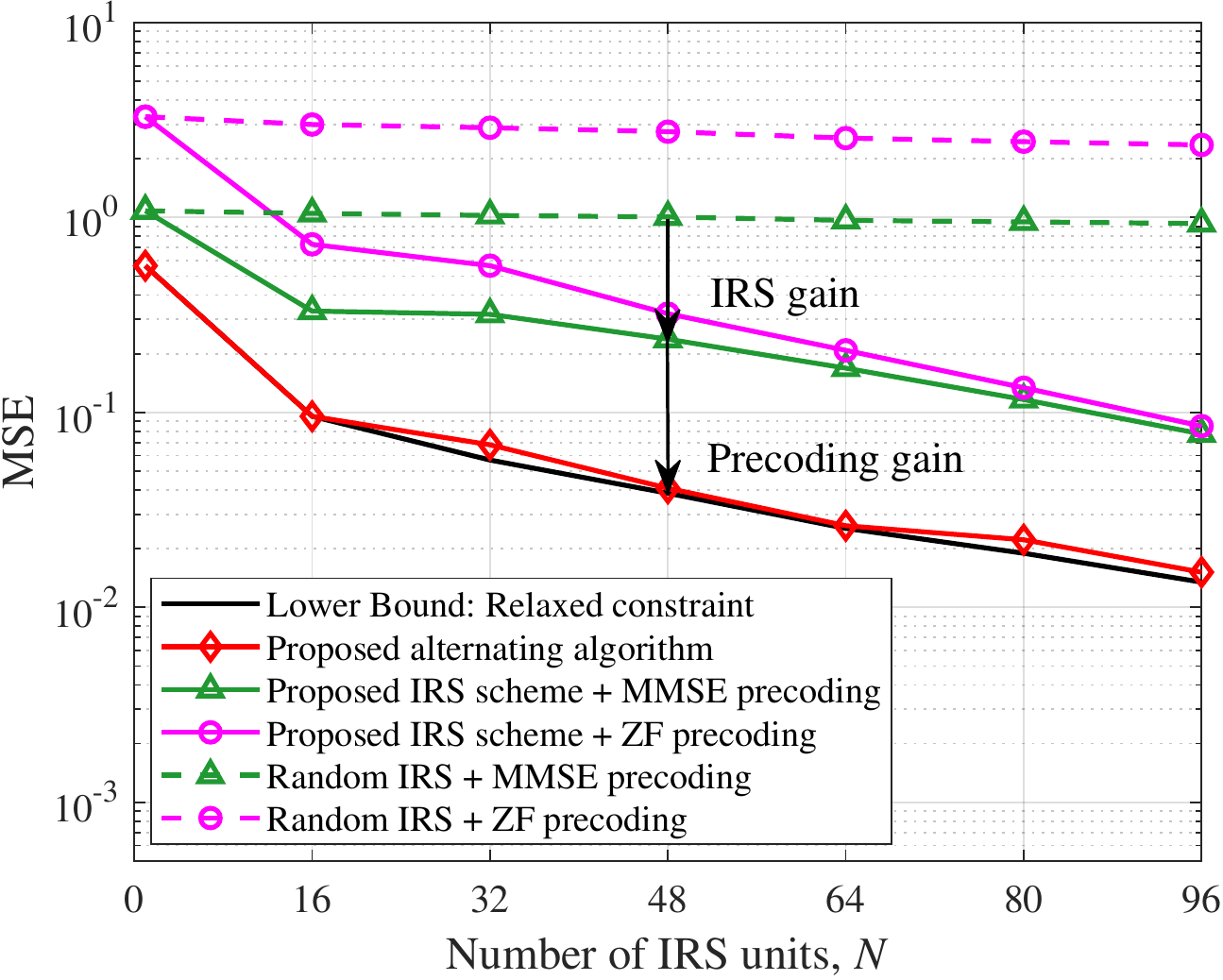}
		\caption{MSE versus the number of IRS units.}
		\label{Fig:Number}
	\end{minipage}
	\hspace{0.25cm}
	\begin{minipage}[b]{0.45\linewidth}
		\centering
		\includegraphics[width=1\textwidth]{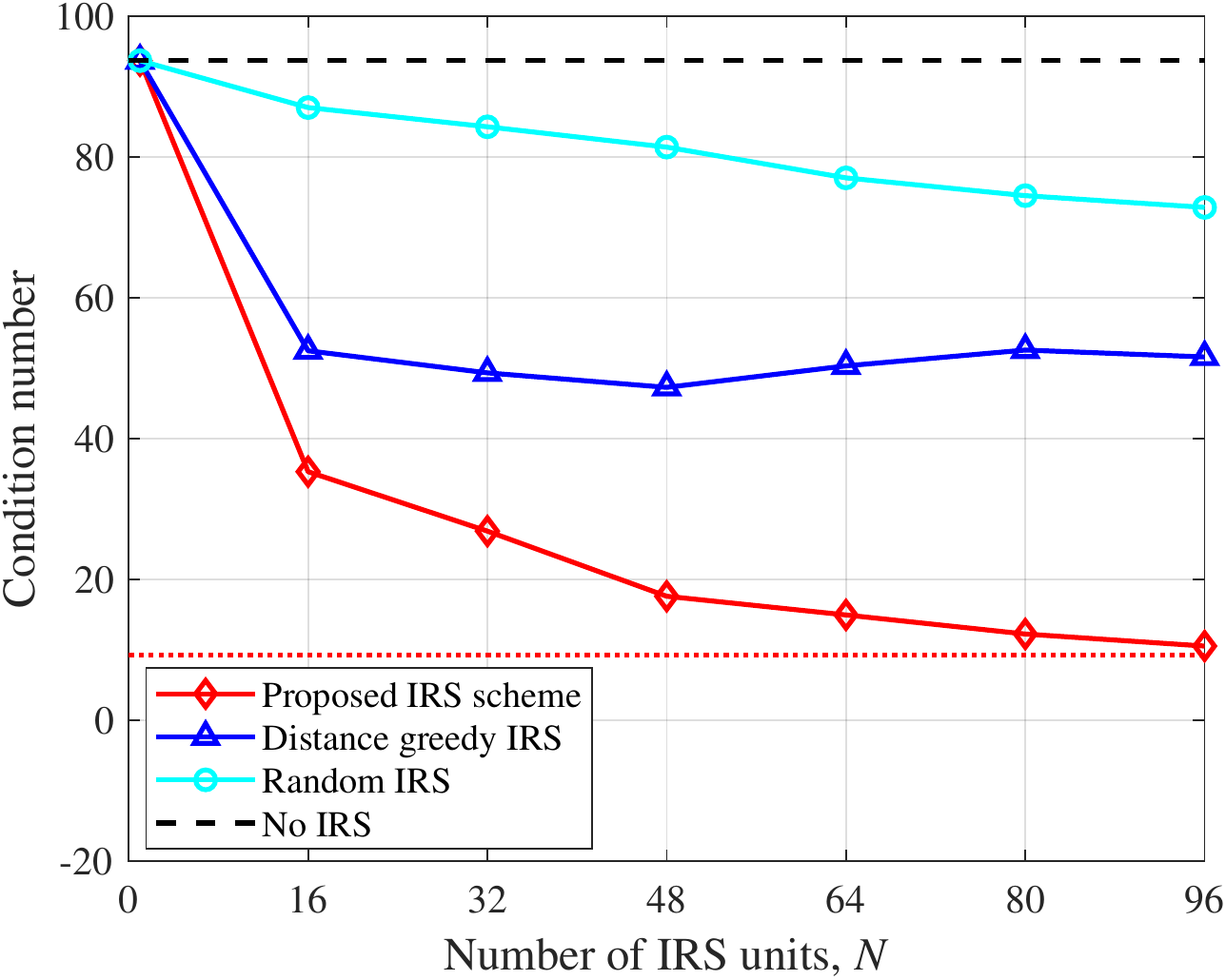}
		\caption{Condition number of MIMO VLC versus the number of IRS units.}
		\label{Fig:CondNumber}
	\end{minipage}
	%	\vspace{-0.2cm}
\end{figure}
%\begin{figure}[t]
%	\centering
%%	\includegraphics[width=0.45\textwidth]{MSEvsNumber.pdf}
%	\includegraphics[width=0.55\textwidth]{MSEvsNumber.pdf}
%	\caption{The MSE performance versus the number of IRS units.}
%	\label{Fig:Number}
%\end{figure}
%\begin{figure}[t]
%	\centering
%%	\includegraphics[width=0.45\textwidth]{CondvsNumber.pdf}
%	\includegraphics[width=0.55\textwidth]{CondvsNumber.pdf}
%	\caption{Condition number of MIMO VLC versus the number of IRS units.}
%	\label{Fig:CondNumber}
%\end{figure}
The performance of MIMO VLC suffers greatly from strong channel correlation since the large MIMO channel matrix condition number results in low multiplexing gain~\cite{ying2015joint}.
Owing to the capability of reconfiguring the wireless environment, IRS has the potential to reduce the condition number of MIMO channels~\cite{zhang2020capacity}.
As depicted in Fig.~\ref{Fig:CondNumber}, the condition numbers of three IRS-aided MIMO channels when $N=96$ are $12$, $52$, and $73$, respectively, significantly improving the condition number $94$ of no-IRS scheme.
Therefore, the IRS-aided MIMO VLC significantly enhances the channel multiplexing gain.
\begin{figure}[t]
	\centering
	\includegraphics[width=0.45\textwidth]{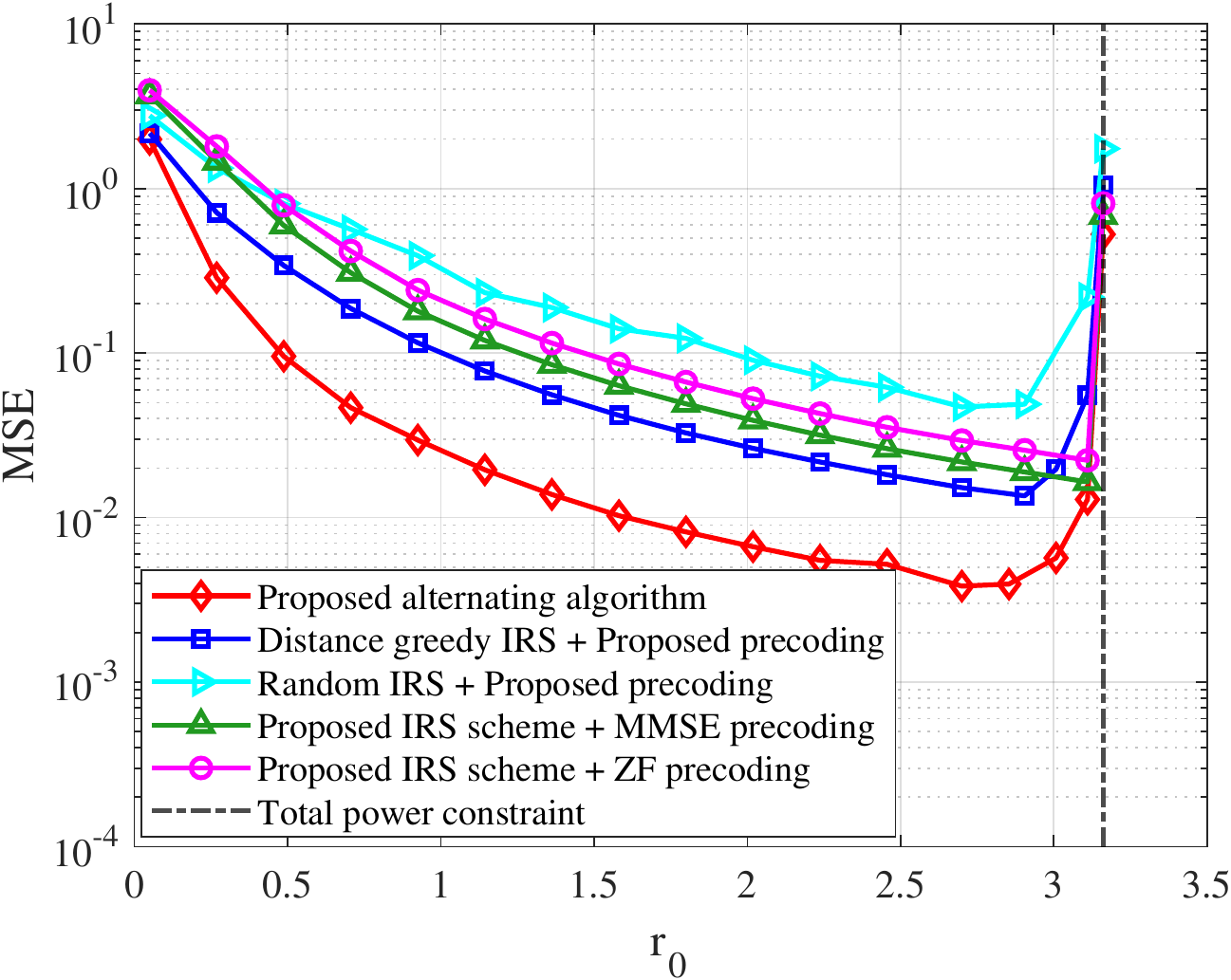}
	\caption{MSE versus the direct current bias $r_0$.}
	\label{Fig:MSEbias}
%	\vspace{-0.2cm}
\end{figure}

% 介绍d的影响
As an evolutionary technology combining both communication and illumination functions, VLC requires an appropriate direct current bias to meet the demand of human eyes.
The effect of the direct bias $r_0$ on MSE is investigated in Fig.~\ref{Fig:MSEbias}, where $r_0$ varies from 0 to $\sqrt{P_{\text{total}}/N_t}$.
The result shows that the MSEs of all schemes decrease versus $r_0$ in the low-bias regime, while it inversely increases when $r_0$ approaches the maximum $\sqrt{P_{\text{total}}/N_t}$.
This is because firstly the feasible set of $\boldsymbol{W}$ is enlarged when $r_0$ increases according to the constraint of~(\ref{Eq:single_power_constraint}), while subsequently the set becomes smaller since the constraint of~(\ref{Eq:total_power_constraint}) becomes active.
Furthermore, the feasible set of $\boldsymbol{W}$ under the condition $r_0 = \sqrt{P_{\text{total}}/N_t}$ is an empty set, which results in the worst MSE performance.
Based on the above, these is an optimal value of the direct current bias $r_0$ that minimizes the MSE.

%****************************************************结论******************************************************************
\vspace{-0.3cm}
\section{Conclusions}
\label{Sec:Conclude}
This paper presented the channel model and signal processing for a new IRS-aided MIMO VLC system, where the IRS configuration, precoding at the transmitter, and detection at the receiver are jointly optimized to minimize the detection MSE.
The non-convex optimization problem was transformed into a more tractable form, and an AO algorithm was proposed to solve it by decoupling the problem into three subproblems, which were shown to be convex and solved efficiently by the PGD algorithm.
In addition, theoretical analysis was provided to ensure the convergence of the proposed algorithm and derive its computational complexity, and furthermore, we discussed simplified algorithms in low/high SNR regimes.
Finally, numerical results showed that the optical IRS can potentially reduce MSE and BER at near-IRS user loations and the proposed AO algorithm outperforms other baselines.
%To sum up, this paper discussed the system modeling and the problem formulation of the IRS-aided MIMO VLC, and except for proposing an theoretically guaranteed optimization algorithm to minimize MSE, various simulations are carried out to verify that IRS can not only improve the MSE and BER performance, but also significantly improve the multiplexing gain of MIMO VLC.

\appendix

\subsection{Global Optimal Point of the Lagrangian Function}
\label{App:dual function}
In pursuing the dual function based on~(\ref{Eq:LagDual_function}), the minimum value of the Lagrangian function $\mathcal{L}( \boldsymbol{V}^*, \boldsymbol{\mu}_1, \boldsymbol{\mu}_2, \boldsymbol{\mu}_3 )$ can be attained by setting $\nabla_{\boldsymbol{V}} \mathcal{L}( \boldsymbol{V}^*, \boldsymbol{\mu}_1, \boldsymbol{\mu}_2, \boldsymbol{\mu}_3 ) = \boldsymbol{0}$,
% Eq.~(\ref{Eq:Lag_0})
%\begin{equation}
%	\label{Eq:Lag_0}
%	\nabla_{\boldsymbol{V}} \mathcal{L}\left( \boldsymbol{V}^*, \boldsymbol{\mu}_1, \boldsymbol{\mu}_2, \boldsymbol{\mu}_3 \right) = \boldsymbol{0},
%\end{equation}
which leads to a globally optimal solution since $\mathcal{L}( \boldsymbol{V}^*, \boldsymbol{\mu}_1, \boldsymbol{\mu}_2, \boldsymbol{\mu}_3 )$ is in a quadratic form of $\text{vec}(\boldsymbol{V})$.
According to the chain rule of matrix differentiation, the derivative is further derived as
\begin{align}
	\label{Eq:dualFunction_derivative}
	\nabla_{\text{vec}\left(\boldsymbol{V}\right)}\ \mathcal{L}\left( \boldsymbol{V}, \boldsymbol{\mu}_1, \boldsymbol{\mu}_2, \boldsymbol{\mu}_3 \right) = \frac{\partial \text{vec}\left( \boldsymbol{H} \right)}{\partial \text{vec}\left( \boldsymbol{V} \right)}^T \!\!\! \frac{\partial \text{MSE}}{\partial \text{vec}\left( \boldsymbol{H} \right)} + \text{vec}\left(\boldsymbol{\mu}_1\boldsymbol{1}_{1\times N_tN_r} \right) - \boldsymbol{\mu}_2 + \boldsymbol{\mu}_3,
\end{align}
where the second term holds due to the equation $\nabla_{\text{vec}(\boldsymbol{V})}(\boldsymbol{a}^T\boldsymbol{V}\boldsymbol{b}) = \text{vec}(\boldsymbol{a}\boldsymbol{b}^T)$.
As for the first term, the MSE metric can be re-expressed as the sum of a quadratic term~(\ref{Eq:MSE_quadratic}) and an affine term, and thereby its derivative with respect to $\text{vec}(\boldsymbol{H})$ can be expressed as
\begin{align}
	\label{Eq:dualFunction_derivative_first}
	\frac{\partial \text{MSE}}{\partial \text{vec}\left( \boldsymbol{H} \right)} = \left( \boldsymbol{U} + \boldsymbol{U}^T \right)\text{vec}\left(\boldsymbol{H}\right) - 2\text{vec}\left(\boldsymbol{Q}^{(t)T}\boldsymbol{R}_{\boldsymbol{x}}\boldsymbol{W}^{(t)T}\right),
\end{align}
where the equation holds due to $\nabla_{\boldsymbol{H}}\text{tr}(\boldsymbol{CH}) = \boldsymbol{C}^T$ and $\boldsymbol{U}$ is introduced as
\begin{align}
	\label{Eq:dualFunction_simplification}
	\boldsymbol{U} = \left( \boldsymbol{W}^{(t)}\boldsymbol{R}_{\boldsymbol{x}}\boldsymbol{W}^{(t)T} \right)\otimes\left( \boldsymbol{Q}^{(t)T}\boldsymbol{Q}^{(t)} \right).
\end{align}
On the other hand, $\boldsymbol{H}$ is related to $\boldsymbol{V}$ through~(\ref{Eq:H_H1_H2}) and~(\ref{Eq:V_linear}), and therefore its derivative with respect to $\text{vec}(\boldsymbol{V})$ results in a Jacobian matrix given by
\begin{equation}
	\label{Eq:dualFunction_derivative_second}
	\frac{\partial \text{vec}\left( \boldsymbol{H} \right)}{\partial \text{vec}\left( \boldsymbol{V} \right)} = \text{diag}\left( \boldsymbol{H}^{\textit{NLoS}} \right)^T.
\end{equation}

By substituting~(\ref{Eq:dualFunction_derivative_first}) and~(\ref{Eq:dualFunction_derivative_second}) into~(\ref{Eq:dualFunction_derivative}) and setting the derivative of the Lagrangian function to zero, the globally optimal solution can be obtained as
\begin{align}
	\label{Eq:dualFunction_derivative_0}
	\text{vec}\left( \boldsymbol{V}^* \right) =& \boldsymbol{Z}^{-1} \Bigg\{ \text{diag}\left( \boldsymbol{H}^{\textit{NLoS}} \right) \Big[2\text{vec}\left(\boldsymbol{Q}^{(t)T}\boldsymbol{R}_{\boldsymbol{x}}\boldsymbol{W}^{(t)T}\right) - \left( \boldsymbol{U} + \boldsymbol{U}^T \right)\text{vec}\left( \boldsymbol{H}_1 \right)\Big] \notag\\
	& - \text{vec}\left(\boldsymbol{\mu}_1\boldsymbol{1}_{1\times N_tN_r} \right) + \boldsymbol{\mu}_2 - \boldsymbol{\mu}_3 \Bigg\},
\end{align}
%\begin{align}
%	\label{Eq:dualFunction_derivative_0}
%	&\text{vec}\left( \boldsymbol{V}^* \right) = \boldsymbol{Z}^{-1} \Bigg\{ 2 \text{diag}\left( \boldsymbol{H}^{\textit{NLoS}} \right)\text{vec}\left(\boldsymbol{Q}^{(t)T}\boldsymbol{R}_{\boldsymbol{x}}\boldsymbol{W}^{(t)T}\right) \notag\\
%	& - \text{vec}\left(\boldsymbol{\mu}_1\boldsymbol{1}_{1\times N_tN_r} \right) - \text{diag}\left( \boldsymbol{H}^{\textit{NLoS}} \right)\left( \boldsymbol{U} + \boldsymbol{U}^T \right)\text{vec}\left( \boldsymbol{H}_1 \right) \notag\\
%	& + \boldsymbol{\mu}_2 - \boldsymbol{\mu}_3 \Bigg\}.
%\end{align}
where $\boldsymbol{Z}$ is a constant matrix given by
\begin{equation}
	\label{Eq:Z}
	\boldsymbol{Z} = \text{diag}\left( \boldsymbol{H}^{\textit{NLoS}} \right) \left( \boldsymbol{U} + \boldsymbol{U}^T \right) \text{diag}\left( \boldsymbol{H}^{\textit{NLoS}} \right)^T.
\end{equation}

\subsection{Gradient Derivation of the Dual Function}
\label{App:gradient}
Given that $g( \boldsymbol{\mu}_1, \boldsymbol{\mu}_2, \boldsymbol{\mu}_3 ) = \mathcal{L}( \boldsymbol{V}^*, \boldsymbol{\mu}_1, \boldsymbol{\mu}_2, \boldsymbol{\mu}_3 )$ and $\boldsymbol{V}^*$ is with respect to $\boldsymbol{\mu}_1$, $\boldsymbol{\mu}_2$, and $\boldsymbol{\mu}_3$ according to~(\ref{Eq:dualFunction_derivative_0}), the gradient of the dual function is related to both the MSE metric and the Lagrangian multipliers.
Therefore, the dual function needs to compute the partial derivative of the vector $\boldsymbol{\mu}_1$, which leads to the following result,
\begin{align}
	\label{Eq:dual_mu1_gradient}
	\nabla_{\boldsymbol{\mu}_1}\ g\left( \boldsymbol{\mu}_1, \boldsymbol{\mu}_2, \boldsymbol{\mu}_3 \right) = \boldsymbol{V}^*\boldsymbol{1}_{N_tN_r\times 1} - \boldsymbol{1}_{N\times 1} + \frac{\partial \text{vec}\left( \boldsymbol{V}^* \right)}{\partial \boldsymbol{\mu}_1}^T \!\! \left[ \frac{\partial \text{MSE}\left( \boldsymbol{V}^* \right)}{\partial \text{vec}\left( \boldsymbol{V}^* \right)}\! +\! \text{vec}\left( \boldsymbol{\mu}_1 \boldsymbol{1}_{1\times N_tN_r} \right) \!-\! \boldsymbol{\mu}_2\! +\! \boldsymbol{\mu}_3 \right]\!,
\end{align}
where the Jacobian matrix is expressed as
\begin{equation}
	\label{Eq:dual_vecV_mu}
	\frac{\partial \text{vec}\left( \boldsymbol{V}^* \right)}{\partial \boldsymbol{\mu}_1} = -\boldsymbol{Z}^{-1}\left(\boldsymbol{1}_{N_tN_r\times 1}\otimes\boldsymbol{I}_{N} \right).
\end{equation}

Similarly, Jacobian matrices for the other two multipliers can be obtained as
\begin{equation}
	\label{Eq:dual_vecV_mu23}
	\frac{\partial \text{vec}\left( \boldsymbol{V}^* \right)}{\partial \boldsymbol{\mu}_2} = \boldsymbol{Z}^{-1},\quad \frac{\partial \text{vec}\left( \boldsymbol{V}^* \right)}{\partial \boldsymbol{\mu}_3} = -\boldsymbol{Z}^{-1},
\end{equation}
which are both symmetric since $(\boldsymbol{Z}^{-1})^T = \boldsymbol{Z}^{-1}$.
Consequently, the partial derivatives with respect to $\boldsymbol{\mu}_2$ and $\boldsymbol{\mu}_3$ are given by
\begin{align}
	\label{Eq:dual_mu2_gradient}
	\nabla_{\boldsymbol{\mu}_2}\ g\left( \boldsymbol{\mu}_1, \boldsymbol{\mu}_2, \boldsymbol{\mu}_3 \right) = -\text{vec}\left( \boldsymbol{V}^* \right) + \boldsymbol{Z}^{-1}\left[ \frac{\partial \text{MSE}\left( \boldsymbol{V}^* \right)}{\partial \text{vec}\left( \boldsymbol{V}^* \right)} + \text{vec}\left( \boldsymbol{\mu}_1 \boldsymbol{1}_{1\times N_tN_r} \right) - \boldsymbol{\mu}_2 + \boldsymbol{\mu}_3 \right],
\end{align}
and
\begin{align}
	\label{Eq:dual_mu3_gradient}
	\nabla_{\boldsymbol{\mu}_3}\ g\left( \boldsymbol{\mu}_1, \boldsymbol{\mu}_2, \boldsymbol{\mu}_3 \right) = \text{vec}\left( \boldsymbol{V}^* \right) - \boldsymbol{1}_{NN_tN_r\times 1} - \boldsymbol{Z}^{-1}\left[ \frac{\partial \text{MSE}\left( \boldsymbol{V}^* \right)}{\partial \text{vec}\left( \boldsymbol{V}^* \right)} + \text{vec}\left( \boldsymbol{\mu}_1 \boldsymbol{1}_{1\times N_tN_r} \right) - \boldsymbol{\mu}_2 + \boldsymbol{\mu}_3 \right],
\end{align}
respectively, where the derivative of MSE with respect to $\text{vec}(\boldsymbol{V}^*)$ is obtained as~(\ref{Eq:dualFunction_derivative_first}) and~(\ref{Eq:dualFunction_derivative_second}).

%\cite{*}
\bibliographystyle{abbrv}
\bibliography{IEEEabrv,reference}

% that's all folks
%%%%%%%%%% If using BibTeX:
%\bibliography{reference}
\end{document}